


\documentclass{article}
\pdfpagewidth=8.5in
\pdfpageheight=11in

\usepackage{ijcai23}

\usepackage{times}
\usepackage{soul}
\usepackage{url}
\usepackage[hidelinks]{hyperref}
\usepackage[utf8]{inputenc}
\usepackage[small]{caption}
\usepackage{graphicx,color}
\usepackage{amsmath, bm}
\usepackage{amssymb}
\usepackage{amsfonts}
\usepackage{amsthm}
\usepackage{dsfont}
\usepackage{booktabs}
\usepackage[switch]{lineno}

\usepackage{bbm}
\usepackage{nicefrac}

\urlstyle{same}

\usepackage[noend]{algorithm2e}

\RestyleAlgo{ruled}
\usepackage{mathtools}
\mathtoolsset{showonlyrefs}
\urlstyle{same}


\newtheorem{example}{Example}
\newtheorem{theorem}{Theorem}
\newtheorem{definition}{Definition}
\newtheorem{assumption}{Assumption}
\newtheorem{lemma}{Lemma}
\newtheorem{problem}{Problem}

\usepackage{pgfplots}
\DeclareUnicodeCharacter{2212}{−}
\usepgfplotslibrary{groupplots, dateplot}
\usetikzlibrary{patterns, shapes, automata, arrows, shapes.arrows, positioning, decorations.pathreplacing, calligraphy, calc}


\pgfplotsset{compat=newest}

\usepackage{tikz}

\usepgfplotslibrary{external}
\pgfplotsset{compat=newest}
\newenvironment{customlegend}[1][]{%
	\begingroup
	\csname pgfplots@init@cleared@structures\endcsname
	\pgfplotsset{#1}%
}{%
	\csname pgfplots@createlegend\endcsname
	\endgroup
}%

\def\addlegendimage{\csname pgfplots@addlegendimage\endcsname}

\newcommand\blfootnote[1]{%
  \begingroup
  \renewcommand\thefootnote{}\footnote{#1}%
  \addtocounter{footnote}{-1}%
  \endgroup
}




\pdfinfo{
/TemplateVersion (IJCAI.2023.0)
}

\title{Differential Privacy in Cooperative Multiagent Planning}



\author{
Bo Chen$^{1,*}$
\and
Calvin Hawkins$^{1,*}$\and
Mustafa O. Karabag$^{2,*}$\and
Cyrus Neary$^{2,*}$\and\\
Matthew Hale$^{1}$\And
Ufuk Topcu$^{2}$
\affiliations
$^1$The University of Florida\\
$^2$The University of Texas at Austin
\emails
\{bo.chen, calvin.hawkins, matthewhale\}@ufl.edu,
\{karabag, cneary, utopcu\}@utexas.edu
}

\begin{document}

\definecolor{minDependencyPolicy}{RGB}{180, 65, 161}
\definecolor{baselinePolicy}{RGB}{46, 36, 47}

\newcommand{\expectation}{\mathbb{E}}
\newcommand{\kl}{KL}
\newcommand{\entropy}{H}
\newcommand{\distribution}{\Delta}
\newcommand{\probabilityMeasure}{\mu}
\newcommand{\genericRandomVar}{Y}
\newcommand{\genericRandomVarSupport}{\mathcal{\genericRandomVar}}
\newcommand{\genericDistribution}{Q}
\newcommand{\genericDistributionSupport}{\mathcal{\genericDistribution}}
\newcommand{\genericFunction}{f}
\newcommand{\genericFunctionAlt}{g}
\newcommand{\emptyString}{\varepsilon}

\newcommand{\epsilonTransition}{\alpha}

\newcommand{\genericString}{w}
\newcommand{\constantNumber}{K}
\newcommand{\genericSet}{V}

\newcommand{\mdp}{\mathcal{M}}
\newcommand{\mdpState}{s}
\newcommand{\mdpStateAlt}{y}
\newcommand{\mdpInitialState}{\mdpState_{I}}
\newcommand{\mdpStateSpace}{\mathcal{S}}
\newcommand{\mdpAction}{a}
\newcommand{\mdpActionAlt}{b}
\newcommand{\mdpActionSpace}{\mathcal{A}}
\newcommand{\mdpReward}{\mathcal{R}}
\newcommand{\mdpTransition}{\mathcal{T}}
\newcommand{\outdegree}{\rho}

\newcommand{\policy}{\pi}

\newcommand{\mdpPath}{\xi}
\newcommand{\mdpStateSeq}{h}
\newcommand{\mdpPathDist}{\Gamma}
\newcommand{\mdpValue}{v}
\newcommand{\mdpStateActionProcess}{X}
\newcommand{\mdpJointPathProcess}{\bm{X}}
\newcommand{\mdpStationaryStateActionProcess}{\bar{X}}
\newcommand{\mdpStateActionProcessAlt}{Y}
\newcommand{\mdpMixedStateActionProcess}{\bar{\mdpStateActionProcess}}
\newcommand{\mdpStateRandomVar}{S}
\newcommand{\mdpActionRandomVar}{A}

\newcommand{\timeHorizon}{T}
\newcommand{\randomReachTime}{\eta}

\newcommand{\joint}{joint}
\newcommand{\fullcommunication}{full}
\newcommand{\fullyimaginary}{full\text{ }img}
\newcommand{\imaginary}{img}
\newcommand{\intermittent}{int}

\newcommand{\game}{\bm{\mdp}}
\newcommand{\gameState}{\bm{\mdpState}}
\newcommand{\gameStateAlt}{\bm{\mdpStateAlt}}
\newcommand{\gameActionAlt}{\bm{\mdpActionAlt}}
\newcommand{\gameInitialState}{\bm{\mdpInitialState}}
\newcommand{\gameStateSpace}{\bm{\mdpStateSpace}}
\newcommand{\gameAction}{\bm{\mdpAction}}
\newcommand{\gameActionSpace}{\bm{\mdpActionSpace}}
\newcommand{\gameTransition}{\bm{\mdpTransition}}
\newcommand{\gameReward}{\bm{\mdpReward}}
\newcommand{\gameStateRandomVar}{\bm{\mdpStateRandomVar}}
\newcommand{\gameActionRandomVar}{\bm{\mdpActionRandomVar}}
\newcommand{\gameStateActionProcessAlt}{\bm{\mdpStateActionProcessAlt}}
\newcommand{\len}{len}
\newcommand{\expectedLength}{l}
\newcommand{\pathSet}{W}
\newcommand{\reachPathSet}{R}
\newcommand{\gameAbsorbingState}{\gameState_\alpha}

\newcommand{\jointPolicy}{\bm{\policy}}
\newcommand{\localPolicy}{\policy}

\newcommand{\targetSet}{\gameStateSpace_{\mathcal{T}}}

\newcommand{\deadSet}{\gameStateSpace_{\mathcal{A}}}
\newcommand{\deadSetPrime}{\gameStateSpace_{\mathcal{D}}}
\newcommand{\doneSet}{\gameStateSpace_{\mathcal{E}}}
\newcommand{\gameProcessEndState}{\gameState_{\epsilonTransition}}
\newcommand{\mdpProcessEndState}{\mdpState_{\epsilonTransition}}

\newcommand{\gamePath}{\bm{\mdpPath}}
\newcommand{\gamePathDist}{\bm{\mdpPathDist}}
\newcommand{\gameValue}{\bm{\mdpValue}}
\newcommand{\gameStateActionProcess}{\bm{\mdpStateActionProcess}}
\newcommand{\totalCorrelation}{C}
\newcommand{\totalCorrelationUpperBound}{\bar{\totalCorrelation}}

\newcommand{\numAgents}{N}

\newcommand{\probabilityFailureForever}{p}
\newcommand{\probabilityFailureOneStep}{q}
\newcommand{\sequenceCommAvailibility}{\Lambda}
\newcommand{\oneStepCommAvailibility}{\lambda}

\newcommand{\expectedLengthCoef}{\delta}
\newcommand{\totalCorrelationCoef}{\beta}
\newcommand{\occupancyVar}{x}

\newcommand{\rover}{R}
\newcommand{\robot}{R}
\newcommand{\agent}{R}
\newcommand{\base}{B}
\newcommand{\goal}{T}

\newcommand{\adj}{\textnormal{Adj}}
\newcommand{\oblivious}{tr}
\newcommand{\private}{pr}
\newcommand{\genericPath}{\bm{w}}
\newcommand{\privatePath}{\Tilde{\bm{w}}}
\newcommand{\privateMdpState}{\Tilde{\mdpState}}
\newcommand{\privateGameState}{\Tilde{\gameState}}
\newcommand{\privacyLevel}{\epsilon}
\newcommand{\adjParam}{k}
\newcommand{\trueStateProb}{\tau}

\newcommand{\probONB}{p_{\textrm{onb}}}
\newcommand{\probOffline}{p_{\textrm{off}}}
\newcommand{\probRepair}{p_{\textrm{r}}}

\maketitle

\begin{abstract}

Privacy-aware multiagent systems must protect agents' sensitive data while simultaneously ensuring that agents accomplish their shared objectives. Towards this goal, we propose a framework to privatize inter-agent communications in cooperative multiagent decision-making problems. We study sequential decision-making problems formulated as cooperative Markov games with reach-avoid objectives. We apply a differential privacy mechanism to privatize agents' communicated symbolic state trajectories, and then we analyze tradeoffs between the strength of privacy and the team's performance. For a given level of privacy, this tradeoff is shown to depend critically upon the total correlation among agents' state-action processes. We synthesize policies that are robust to privacy by reducing the value of the total correlation. Numerical experiments demonstrate that the team's performance under these policies decreases by only \(3\) percent when comparing private versus non-private implementations of communication. By contrast, the team's performance decreases by roughly \(86\) percent when using baseline policies that ignore total correlation and only optimize team performance.
\blfootnote{\textsuperscript{*} Indicates equal contribution.}

\end{abstract}

\section{Introduction}

In cooperative multiagent systems, a team of decision-making agents interact with a shared environment to accomplish a common objective~\cite{Cao2013distributed,Parker2016robot}.
In these systems, inter-agent communication is often necessary for the successful coordination of the team; each agent typically relies on information pertaining to its teammates while making its own decisions.
However, this communicated information may be sensitive.
For example, it might be beneficial for autonomous vehicles to share location data while solving multi-vehicle routing problems. 
However, this data would reveal the passengers' sensitive location data.
Privacy-aware multiagent systems should thus protect the agents' sensitive data, while simultaneously ensuring that the agents are able to accomplish their common objective.

In this work, we develop such privacy-aware multiagent systems.
In particular, we study sequential multiagent decision problems formulated as cooperative Markov games with reach-avoid objectives.
We assume that a trusted central aggregator is used to synthesize a collection of local policies for the team of agents \emph{a priori}. 
A local policy of an agent is a mapping from the joint state space of the agents to the agent's local action space. 
However, during policy execution, the agents want to keep their individual state trajectories private from their teammates and from potential eavesdroppers (the aggregator is not involved at run time). 
When the local policies do not take privacy into consideration, their performance under private communications can decrease dramatically, as shown by our numerical results.
Thus, we develop a framework to privatize the inter-agent communications required to execute the policies, and to synthesize policies that are performant under private communications.

We use \textit{differential privacy}~\cite{dwork2014algorithmic} to develop a framework providing formal privacy guarantees in multiagent systems.
In the Markov game, each agent is modeled by a Markov decision process (MDP) and we are concerned with privatizing the state trajectories of these MDPs. 
We implement differential privacy using the Online Mechanism for Markov chains presented in~\cite{chen2022differential}. 
This mechanism guarantees differential privacy for the symbolic state trajectories produced by MDPs, provides an efficient method for agents to generate private states in real time, and ensures that the private trajectory is feasible with respect to the underlying dynamics of the MDP.
The strength of these privacy guarantees can be tuned by each agent. 

Our specific contributions in this work are as follows:

\begin{enumerate}

    \item \textit{A framework for differential privacy in multiagent systems.} We propose a framework for differential privacy in multiagent planning problems.
    The framework allows for the decentralized execution of local policies under private inter-agent communications.
    
    \item \textit{Theoretical results: Analyzing the tradeoff between privacy and performance.} 
    We bound the team's success probability under private communications in terms of the strength of privacy and total correlation of agents' state-action processes.
    
    \item \textit{Synthesis of policies to balance privacy and performance.} 
    By minimizing this total correlation value, we use the tradeoffs between privacy and performance to synthesize policies for the multiagent system that achieve high performance under strong levels of privacy.
    
\end{enumerate}

Numerical experiments demonstrate the strong performance of the synthesized policies, even with private communications.
We observe that under privatized communication: 1) the proposed minimum-dependency policies are \(80\) percent more performant than baseline policies that only optimize the team's performance under truthful communications and that ignore total correlation, 2) as the total correlation decreases, the team's performance increases, and 3) the performance of the minimum-dependency policies is robust to the level of privacy enforced by the privacy mechanism.

Despite the importance of privacy in multiagent systems~\cite{such2014privacysurvay}, existing algorithms for multiagent planning and learning typically do not examine the tradeoff between privacy and team performance, and many do not consider privacy at all.
\cite{nissim2014distributed,brafman2015privacy} introduce the notion of \textit{strong privacy} in mutliagent planning for deterministic problems. These works develop algorithms that ensure agents do not share sensitive states or actions when executing a distributed planning algorithm. \cite{Ye2022} replaces the notion of strong privacy with differential privacy to privatize the information needed for decentralized planning in a deterministic case. \cite{hefner2022privacy} extends the notion of strong privacy to stochastic systems and develops a distributed value iteration algorithm. 
These works are concerned with hiding a private portion of each agent's states, and they do not consider mechanisms in which the agents achieve privacy by altering their shared information.
By contrast, our work studies a differential privacy mechanism that alters the state trajectories of the agents during multiagent communication in stochastic environments.

Meanwhile, differential privacy has been studied in the context of planning and reinforcement learning for MDPs \cite{garcelon2020local,qiao2022offline,gohari2021differential}. 
However, these works study single-agent problems and they are mainly concerned with privatizing value functions, reward values, or transition probabilities.
Our work instead considers the multiagent setting and we define differential privacy over symbolic state trajectories.
In particular, we extend the differential privacy mechanism presented in \cite{chen2022differential} to multiagent planning problems, and we study the impact of privacy on the team's performance.

Decentralized policy execution has gained attention for planning and reinforcement learning in multiagent MDPs \cite{becker2003transition,rashid2018qmix,son2019qtran,oliehoek2016concise,Karabag2022}.
As a byproduct of decentralized policy execution, these algorithms may achieve privacy in the sense that agents do not communicate locally available information.
However, these works do not explicitly consider privacy or give privacy guarantees.
We instead allow for communication and use total correlation as a soft decentralization metric, which enables the synthesis of policies that are performant under private communications.

\section{Preliminaries}
\label{sec:prelims}

The entropy of a discrete random variable $Y$ with a support $\mathcal{Y}$ is $H(Y)=-\sum_{y \in \mathcal{Y}} \operatorname{Pr}(Y=y) \log (\operatorname{Pr}(Y=y))$. 

\subsection{Cooperative Markov Games }
\label{sec:prelim_games}
Given a finite collection of $\numAgents$ agents indexed by $i\in\{1,2,\dots,\numAgents\},$ we model the dynamics of agent $i$ with an MDP $\mdp^i$. An MDP is a tuple $\mdp^i = (\mdpStateSpace^i,\mdpInitialState^i,\mdpActionSpace^i,\mdpTransition^i),$ where $\mdpStateSpace^i$ is agent $i$'s finite set of local states, $\mdpInitialState^i\in\mdpStateSpace^i$ is an initial state, $\mdpActionSpace^i$ is agent $i$'s finite set of local actions, and $\mdpTransition^i:\mdpStateSpace^i\times\mdpActionSpace^i\to\distribution(\mdpStateSpace^i)$ is a transition probability function, where $\distribution(\mdpStateSpace^i)$ denotes the set of probability distributions over the state space $\mdpStateSpace^i.$ For brevity, we use $\mdpTransition^i(s^i,a^i,y^i)$ to denote the probability of $y^i$ given by the distribution $\mdpTransition^i(s^i,a^i).$ A state $\mdpState^i_j\in\mdpStateSpace^i$ is called a \emph{feasible state} of another state $\mdpState^i_k\in\mdpStateSpace^i$ if there exists an action $a^i\in\mdpActionSpace^i$ such that $\mdpTransition^i(\mdpState^i_k,\mdpAction^i,\mdpState^i_j)>0$. 

Given such a collection of agents, we formulate the team's decision problem as a cooperative Markov game $\game.$ A cooperative Markov
game involving $\numAgents$ agents, each of which is modeled by an MDP $\mdp^i = (\mdpStateSpace^i,\mdpInitialState^i,\mdpActionSpace^i,\mdpTransition^i),$ is given by the tuple $\game = (\gameStateSpace,\gameInitialState,\gameActionSpace,\gameTransition).$ Here, $\gameStateSpace=\mdpStateSpace^1\times\dots\times\mdpStateSpace^\numAgents$ is the joint state space, $\gameInitialState=(\mdpInitialState^1,\dots,\mdpInitialState^\numAgents)$ is the joint initial state, $\gameActionSpace=\mdpActionSpace^1\times\dots\times\mdpActionSpace^\numAgents$ is the joint action space, and $\gameTransition$ is the joint transition probability function. For brevity, we use $\gameTransition(\gameState,\gameAction,\bm{y})$ to denote the probability of $\bm{y}$ given the distribution $\gameTransition(\gameState,\gameAction).$ Let $\gameState\in\gameStateSpace$ and $\gameAction\in\gameActionSpace$ denote a joint state and action, respectively. $\gameTransition$ is defined as $\gameTransition(\gameState,\gameAction,\gameStateAlt)=\prod_{i=1}^\numAgents \mdpTransition^i(\mdpState^{i},\mdpAction^{i},\mdpStateAlt^{i})$ for all $\gameState=(\mdpState^1,\dots,\mdpState^\numAgents)\in\gameStateSpace,$ $\gameStateAlt=(\mdpStateAlt^1,\dots,\mdpStateAlt^\numAgents)\in\gameStateSpace$ and $\gameAction=(\mdpAction^1,\dots,\mdpAction^\numAgents)\in \gameActionSpace.$

For notational convenience, we use $\gameState^{-i} \in \mdpStateSpace^1 \times \ldots \times \mdpStateSpace^{i-1} \times \mdpStateSpace^{i+1} \times$ $\ldots \times \mdpStateSpace^\numAgents$ to denote the states of agent $i$ 's teammates, excluding agent $i$ itself. By $\gameStateSpace^{-i}=\mdpStateSpace^1 \times \ldots \times \mdpStateSpace^{i-1} \times \mdpStateSpace^{i+1} \times \ldots \mdpStateSpace^\numAgents$, we denote the values $\gameState^{-i}$ can take. Similarly $\gameAction^{-i}$ and $\gameActionSpace^{-i}$ denote the actions of agent $i$ 's teammates and the set of all possible actions of teammates, respectively.

A (stationary) local policy \(\localPolicy^{i} : \gameStateSpace \to \distribution(\mdpActionSpace^{i})\) of Agent \(i\) is a mapping from a particular joint state to a probability distribution over actions of Agent \(i\). Given the team is in joint state \(\gameState\), \(\localPolicy^{i}(\gameState, \mdpAction^{i})\) denotes the probability that action \(\mdpAction^{i}\) is selected by \(\localPolicy^{i}\) for agent \(i\) . We define a (stationary) joint policy \(\jointPolicy\) to be a collection of local policies, \(\lbrace \localPolicy^{i} \rbrace_{i=1}^{\numAgents}\). 

In a truthful communication setting, at each timestep each agent \(i\) observes its local state \(\mdpState^{i}_{t}\), and communicates this information with all of its teammates. 
Each agent then uses the information communicated by its teammates to construct the team's joint state \(\gameState \in \gameStateSpace\), and subsequently, it uses its local policy \(\policy^{i}(\gameState)\) to sample an action \(\mdpAction^{i} \in \mdpActionSpace^{i}\) to execute. 

In this work we consider team reach-avoid problems. 
That is, the team's objective is to collectively reach 
a target set \(\targetSet \subseteq \gameStateSpace\) of states, while avoiding a set \(\deadSet \subseteq \gameStateSpace\) of states.
The centralized planning problem then is to solve for a collection of local policies \(\lbrace \localPolicy^{i} \rbrace_{i=1}^{\numAgents}\) maximizing the probability of reaching \(\targetSet\) from the team's initial joint state \(\gameInitialState\), while avoiding \(\deadSet\).
We call this probability value the success probability.
More formally, we say that a state-action \textit{trajectory} \(\gamePath = \gameState_0 \gameAction_0 \gameState_1 \gameAction_1 \ldots\) successfully reaches the target set \(\targetSet\) if there exists some time \(M\) such that \(\gameState_{M} \in \targetSet\) and for all \(t < M\), \(\gameState_t \not \in \deadSet\).
While we focus on reach-avoid problems, our framework can be applied to settings with generic rewards.

We use $\occupancyVar_{\gameState, \gameAction}$ to denote the occupancy measure of the state-action pair $(\gameState, \gameAction)$, i.e., the expected number of times that action $\gameAction$ is taken at state $\gameState$. 
Similarly, $\occupancyVar_{\mdpState^i, \mdpAction^i}$ denotes the the occupancy measure of the state-action pair $\left(\mdpState^i, \mdpAction^i\right)$ for agent $i$ where $\occupancyVar_{\mdpState^i, \mdpAction^i}=$ $\sum_{\bm{s^{-i}} \in \gameStateSpace^{-i}} \sum_{\gameAction^{-i} \in \gameActionSpace^{-i}} \occupancyVar_{(\mdpState^i, \gameState^{-i}), (\mdpAction^{i}, \gameAction^{-i})}.$ 
Let \( \deadSetPrime\) be the states from which the probability of reaching \(\targetSet\) is \(0\) under any collection of local policies. The following assumption ensures that every trajectory satisfies or violates the reachability specification in finite time. 
\begin{assumption}
    The total occupancy measure is finite at states \(\gameStateSpace \setminus (\targetSet \cup \deadSetPrime)\), i.e., \(\sum_{\gameState \in \gameStateSpace\setminus(\targetSet \cup \deadSetPrime), \gameAction \in \gameActionSpace} \occupancyVar_{\gameState, \gameAction} < \infty\).
\end{assumption}

A state-action trajectory $\mdpPath^i$ of the MDP $\mathcal{M}^i$ is a sequence $\mdpPath^i=\mdpState_0^i \mdpAction_0^i \mdpState_1^i \mdpAction_1^i\dots$ such that for all $t=0,1,\dots,$ $\mdpTransition(\mdpState_t^i,\mdpAction_t^i,\mdpState_{t+1}^i)>0.$ We use $\gamePath=\gameState_{0}\gameAction_{0}\gameState_{1}\dots$ to denote the joint state-action trajectory of all agents and $\gamePath^{-i}=\gameState_{0}^{-i}\gameAction_{0}^{-i}\gameState_{1}^{-i}\gameAction_{1}^{-i}\dots$ to denote joint state-action trajectory with agent $i$ excluded. Note that \(\gamePath\) and \(\gamePath^{-i}\) are both strings of vectors. We define the effective length of strings \(\len(\gamePath = \gameState_0 \gameAction_0\ldots) = \min\lbrace t+1 | \gameState_{t} \in \targetSet \cup \deadSetPrime \rbrace \). 
Let agent $i$'s state trajectory up to time $t$ be $\mdpStateSeq_t^i=\mdpState_0^i \mdpState_1^i \dots \mdpState_t^i.$ We are concerned with the privacy of $\mdpStateSeq_t^i$ so that agents can execute their policy without revealing sensitive information. 

\subsection{Differential Privacy}
Differential privacy is enforced by a \emph{mechanism}, which is a randomized map. We enforce differential privacy on a per-agent basis, an approach sometimes called ``local differential privacy". For nearby local state trajectories, a mechanism must produce private trajectories that are approximately indistinguishable. 
The definition of ``nearby" is given by an adjacency relation using the Hamming distance~\cite{Schulz2003distance} denoted by $d(w,v)$, which is a metric that measures the minimum number of substitutions that can be applied to a trajectory $v$ to convert it to $w$.

\begin{definition}[Adjacency]
\label{dfn:adjacency}
Fix a length $T\in\mathbb{N}^+$ and an adjacency parameter $\adjParam\in\mathbb{N}^+$. For an MDP with state space $\mdpStateSpace^i$, the adjacency relation on $(\mdpStateSpace^i)^T$ is $\adj_{T,\adjParam}=\{(v,w)\in (\mdpStateSpace^i)^T \times (\mdpStateSpace^i)^T\ |\ d(v,w)\leq \adjParam\}.$
\end{definition}

This adjacency relation specifies which trajectories are ``nearby'' and thus specifies pairs of trajectories that differential privacy must make approximately indistinguishable. Two $T-$length local trajectories in $(\mdpStateSpace^i)^T$ are adjacent if the Hamming distance between them is less than or equal to $k$. We next introduce the definition of word differential privacy, which guarantees that given a private trajectory, recipients are unlikely to distinguish between the underlying sensitive trajectory and other adjacent trajectories with high confidence. 

\begin{definition}[Word Differential Privacy~\cite{chen2022differential}]
\label{dfn:word_dp}
Fix a probability space $(\Omega,\mathcal{F},\mathbb{P}),$ an adjacency parameter $k\in\mathbb{N}^+,$ a length $T\in\mathbb{N}^+$ and a privacy parameter $\privacyLevel>0$. For an MDP with state space $\mdpStateSpace^i$, a mechanism $M:(\mdpStateSpace^i)^T\times\Omega\to \distribution((\mdpStateSpace^i)^T)$ is $\privacyLevel$-word differentially private if, for all trajectories $(v,w)\in \adj_{T,k}$ and all $L\subseteq (\mdpStateSpace^i)^T,$ it satisfies
$\mathbb{P}[M(v)\in L]\leq e^\privacyLevel \mathbb{P}[M(w)\in L].$
\end{definition}

The privacy parameter $\privacyLevel$ controls the strength of privacy and a smaller $\privacyLevel$ implies stronger privacy. In the literature, $\privacyLevel$ typically ranges from 0.01 to 10~\cite{Hsu2014DifferentialPA}. 

\section{Problem Formulation and Assumptions}
\label{sec:prob_form}
In this section, we state and analyze the problem of privatizing inter-agent communications in a cooperative Markov game. We begin with the problem statements.
Consider $N$ agents playing a cooperative Markov game with a reach-avoid objective as introduced in~\S\ref{sec:prelim_games}.

\begin{problem}
\label{prb:mechanism}
Design an online privacy mechanism that provides $\privacyLevel$-word differential privacy (Definition~\ref{dfn:word_dp}) for the state trajectory~$h_t^i=\mdpState_1^i \mdpState_2^i \dots \mdpState_t^i$ of agent $i$ in real time, i.e., without knowledge of $\mdpState^i_{t+1},\mdpState^i_{t+2},\dots$ at time $t.$ The mechanism should ensure that the private trajectory is still feasible with respect to the dynamics of the underlying MDP $\mdp^i$.
\end{problem}

\begin{problem}
    \label{prb:execution}
    Define an algorithm for the decentralized execution of local policies \(\lbrace \localPolicy^{i} \rbrace_{i=1}^{\numAgents}\) under private communications.
\end{problem}

\begin{problem}
\label{prb:bound}
Given a collection of local policies \(\lbrace \localPolicy^{i} \rbrace_{i=1}^{\numAgents}\), provide a bound on the probability of success under private communications $\gameValue^{\private}$. Use this bound to analyze the tradeoffs between privacy and performance in the multiagent system.
\end{problem}

\begin{problem}
\label{prb:synthesis}
Synthesize policies for the multiagent system that achieve high performance under strong levels of privacy, by taking into account the tradeoffs analyzed in Problem~\ref{prb:bound}.
\end{problem}

\paragraph{Privacy Assumptions:} We formalize what information agents provide to the central planner and what information they hide. We then illustrate this setting with an example.

We assume that each agent trusts a central planner to design local policies. Each agent allows the planner to access its individual MDP denoted as $\mdp^i$ for each $i\in[N]$. The planner also has knowledge of the game's objective, which can be specified as reach and avoid sets $\targetSet$ and $\deadSet$ or a reward function. The central planner provides each agent with a local policy $\localPolicy^{i}$, where ``local" refers to a mapping from the joint state space $\gameStateSpace$ to the local action space $\mdpActionSpace^i$. These local policies are assumed to be stationary and the action distribution of an agent is independent of the actions of the other agents given the joint state. This means that the central planner will not synthesize policies that compromise privacy in the sense that agent $i$ does not gain knowledge of any other agent's actions by sampling its own local policy $\localPolicy^i.$ In addition, we assume that the initial joint state, $\bm{\mdpInitialState}$, is public information.

We also assume that the agents do not fully trust each other. Knowing another agent's transition probabilities $\mdpTransition^i$, actions $a^i$, or rewards can harm that agent's privacy. However, each agent only needs the state information of the other agents to execute its local policy. Therefore, we assume that the agents do not have access to each other's transition probabilities or actions. The agents also do not observe whether the reach-avoid specification is satisfied or violated. Each agent \emph{only} receives the private state information from the rest of the network and a policy from the central planner. This prevents agents from being able to control the other agents' sensitive state trajectories. We also note that agents can know the state space and feasible state transitions of other agents, i.e., the support of $\mdpTransition^i(\mdpState^{i}, \mdpAction^{i})$, without compromising privacy. For example, two rideshare drivers know the possible locations of each other and how one another can transition through those locations, but this knowledge does not prevent the drivers from protecting their location information from each other.

Lastly, the methods presented in this paper can be applied when each agent has a different privacy level, i.e., different values of $\privacyLevel$. However, for convenience we assume that each agent has the same privacy parameter~$\privacyLevel.$ 

\begin{example}
In this example, the sensitive information is the location of two rideshare drivers, Alice and Bob. 
A central planner generates local policies for Alice and Bob to optimize its own objective.
To protect their privacy, Alice and Bob use differential privacy to communicate their locations to each other. For example, they can randomize their location data before sharing it so that their true locations are not revealed. Alice and Bob do not need to know each other's preferences or constraints, only the private state information that they communicate.

With this private information, Alice and Bob can then execute the local policies synthesized by the central planner. However, because they are sharing perturbed location data, the local policies may not be executed as efficiently as they could be if they had access to each other's true locations.

To mitigate this loss in performance, the central planner can use the methods developed in our paper to synthesize a collection of local policies that takes into account the effects of privacy on performance. 
This will allow them to balance the need for privacy with the need for efficient policy execution, and ensure that the passengers are picked up as quickly as possible while preserving the drivers' privacy.
\end{example}

\section{Implementing Local Policies with Private Communications}\label{sec:implemntation}
In this section, we solve Problems~\ref{prb:mechanism} and~\ref{prb:execution}. Specifically, in~\S\ref{subsec:privacy_implementation}, we modify the online mechanism for Markov chains from~\cite{chen2022differential} to privatize state trajectories of an MDP. Then, in~\S\ref{subsec:policy_implementation} we detail how each agent can use other agents' private state information to execute its local policy.

\subsection{Implementing Differential Privacy}\label{subsec:privacy_implementation}
We enforce privacy on a per-agent basis. That is, we develop a mechanism for agent $i$ to share its local state trajectory $\mdpStateSeq_t^i=\mdpState_1^i \mdpState_2^i \dots \mdpState_t^i\in (\mdpStateSpace^i)^t$ in real time while satisfying $\privacyLevel$-word differential privacy from Definition~\ref{dfn:priv_mech}. Here, ``real time" means that the private string will be generated symbol by symbol. To achieve this, agent $i$ will only share a private state trajectory $\Tilde{\mdpStateSeq}_t^i=\privateMdpState_1^i \privateMdpState_2^i \dots \privateMdpState_t^i\in (\mdpStateSpace^i)^t.$ To generate $\Tilde{\mdpStateSeq}_t^i$ in real time, agent $i$ uses an online mechanism $M_{\mdpStateSeq_t^i}$ to generate an individual private state $\privateMdpState_t^i$ at each time step $t.$ 

Each agent needs to communicate its private state with every other agent at every time step $t$ to allow agents to execute their policies. However, the differential privacy guarantee of Definition~\ref{dfn:word_dp} holds over the entire $T-$length state trajectory. This means that even though agents are communicating at each time step, we provide privacy to their entire $T-$length trajectories. We now define the online privacy mechanism that ensures the differential privacy over state trajectories.

\begin{definition}[Online Mechanism~\cite{chen2022differential}]
\label{dfn:priv_mech}
Fix a probability space $(\Omega,\mathcal{F},\mathbb{P})$ and an MDP $\mdp^i=(\mdpStateSpace^i,\mdpInitialState^i,\mdpActionSpace^i,\mdpTransition^i)$. Given a state trajectory $\mdpStateSeq_t^i=\mdpState_
1^i \mdpState_2^i \dots \mdpState_t^i\in (\mdpStateSpace^i)^t,$ with an initial state $\mdpState_I^i$, define the online mechanism $M_{\mdpStateSeq_t^i}$ that generates a private trajectory $\Tilde{\mdpStateSeq}^i_t=\privateMdpState_1^i \privateMdpState_2^i \dots \privateMdpState_t^i\in (\mdpStateSpace^i)^t$ such that $\privateMdpState_t^i$ is sampled from the distribution $\mathbb{P}[\privateMdpState_t^i]=\probabilityMeasure^i_\privacyLevel(\privateMdpState_t^i|\mdpState_t^i,\privateMdpState_{t-1}^i)$ where $\probabilityMeasure^i_\privacyLevel$ is computed by Algorithm~\ref{alg:privacy_construction}.
\end{definition}
\begin{algorithm}
\caption{Online Mechanism Construction}
\label{alg:privacy_construction}
\KwIn{Probability of true transition $\trueStateProb_\privacyLevel$}
\KwOut{$\probabilityMeasure^i_{\privacyLevel}$}
\For{$(\mdpState^{i}_t, \privateMdpState^{i}_{t-1}, \privateMdpState^{i}_t)\in \mdpStateSpace^i\times \mdpStateSpace^i \times \mdpStateSpace^i$}{
            \uIf{$\mdpState^{i}_t=\privateMdpState^{i}_t$\ and $\beta(\privateMdpState^{i}_t,\privateMdpState^{i}_{t-1})=1$}{
                $\probabilityMeasure^i_{\privacyLevel}(\privateMdpState^{i}_t\ |\ \mdpState^{i}_t,\privateMdpState^{i}_{t-1})=\trueStateProb_\privacyLevel(\privateMdpState^{i}_{t-1}).$
            }
            \uElseIf{$\mdpState^{i}_t\neq \privateMdpState^{i}_t$\ and $\beta(\privateMdpState^{i}_t,\privateMdpState^{i}_{t-1})=1$}{
                $\probabilityMeasure^i_{\privacyLevel}(\privateMdpState^{i}_t\ |\ \mdpState^{i}_t,\privateMdpState^{i}_{t-1})=\frac{1-\trueStateProb_\privacyLevel(\privateMdpState^{i}_{t-1})\beta(\mdpState^{i}_t,\privateMdpState^{i}_{t-1})}{\rho(\privateMdpState^{i}_{t-1})-\beta(\mdpState^{i}_t,\privateMdpState^{i}_{t-1})}.$
            }
            \Else{
            $\probabilityMeasure^i_{\privacyLevel}(\privateMdpState^{i}_t\ |\ \mdpState^{i}_t,\privateMdpState^{i}_{t-1})=0.$}
}
\end{algorithm}
In Algorithm~\ref{alg:privacy_construction}, the feasibility indicator function $\beta$ is defined for all $\mdpState^i,\mdpStateAlt^i\in\mdpStateSpace^i$ as
\begin{equation*}
    \beta(\mdpState^i,\mdpStateAlt^i)=\begin{cases}
    1,\ \text{if}\ \exists \mdpAction^i\in\mdpActionSpace^i\ s.t.\ \mdpTransition^i(\mdpStateAlt^i,\mdpAction^i,\mdpState^i)>0,\\
    0,\ otherwise,
    \end{cases}\\
\end{equation*}
and the out-degree $\rho$ is defined for each state $\mdpState^i\in\mdpStateSpace$ as
$\rho(\mdpState^i) = \lvert\{\mdpStateAlt^{i}\in\mdpStateSpace^i\ |\  \exists \mdpAction^i\in\mdpActionSpace^i\ s.t.\ \mdpTransition^i(\mdpState^i,\mdpAction^i,\mdpStateAlt^{i})>0\} \rvert.$

Definition~\ref{dfn:priv_mech} and Algorithm~\ref{alg:privacy_construction} define a privacy mechanism in the form of a conditional probability distribution $\probabilityMeasure^i_{\privacyLevel}.$ To implement the mechanism agent $i$ samples a private output $\privateMdpState^i_t$ from the probability distribution $\probabilityMeasure^i_{\privacyLevel}(\cdot|\ \mdpState_t^i,\privateMdpState_{t-1}^i)$ at each time step $t.$ The mechanism is constructed such that the probability $\probabilityMeasure^i_{\privacyLevel}(\privateMdpState_{t}^i\ |\ \mdpState_t^i,\privateMdpState_{t-1}^i)$ is positive if $\privateMdpState^i_t$ is feasible from the most recent private state $\privateMdpState^i_{t-1},$ and $0$ otherwise. This prevents the mechanism from outputting private trajectories that are not feasible with respect to the dynamics of $\mdp^i.$

When the true, sensitive state $\mdpState_t^i$ is feasible from the previous private output $\privateMdpState^i_{t-1},$ the mechanism outputs $\mdpState_t^i$ with probability $\trueStateProb_\privacyLevel(\privateMdpState^i_{t-1})$ and outputs any other feasible state with a uniform probability whose sum is equal to $1-\trueStateProb_\privacyLevel(\privateMdpState^i_{t-1})$. We refer to the event of outputting the sensitive state $\mdpState_t^i$ at time $t$ as a ``true transition" and $\trueStateProb_\privacyLevel(\privateMdpState^i_{t-1})$ as the ``probability of true transition". In~\S\ref{sec:analytical_results}, we establish a requirement for this mechanism to achieve $\privacyLevel$-word differential privacy.

\subsection{Private Policy Execution}\label{subsec:policy_implementation}
In this section, we solve Problem~\ref{prb:execution} and define an algorithm for the decentralized execution of local policies \(\lbrace \localPolicy^{i} \rbrace_{i=1}^{\numAgents}\) under private communications (Algorithm \ref{alg:policy_exec}). 

\begin{algorithm}
\caption{Policy Execution with Private Communications}
\label{alg:policy_exec}
\SetKwBlock{DoParallel}{Every agent \(i\) does in parallel}{end}
\textbf{Input for every agent \(i\):} Local policy \(\localPolicy^{i}\)

Set \(\privateMdpState^{i}_{0} = \mdpInitialState^{i}\) for all \(i\in [\numAgents]\).

\For{$t=0,1,\dots$}{

\DoParallel{
    Set \(\hat{\gameState}_{t,i} = (\Tilde{\mdpState}^{1}_{t}, \ldots, \Tilde{\mdpState}^{i-1}_{t}, \mdpState^{i}_{t}, \Tilde{\mdpState}^{i+1}_{t}, \ldots, \Tilde{\mdpState}^{\numAgents}_{t})\).
       
    Sample an action $\mdpAction_{t}^{i} \sim \localPolicy^{i}(\hat{\mdpState}_{t,i}).$
    
    Execute $\mdpAction_{t}^i$ and transition to $\mdpState_{t+1}^i\sim\mdpTransition^i(\mdpState_{t}^i,\mdpAction_{t}^i).$

    Share $\privateMdpState_{t+1}^{i}\sim\probabilityMeasure^i_{\privacyLevel}(\cdot|\mdpState_{t+1}^i,\privateMdpState_{t}^i)$ with other agents. 
}

}
\end{algorithm}
Since the agents are communicating potentially false information, no agent truly knows the true joint state of the network. Thus, the network of agents cannot execute the local policies exactly when communications are private. To overcome this, each agent maintains an estimate of the joint state and makes its own action decisions based on this estimate. In this work, we assume that each agent takes the private information as the truth, i.e., each agent estimates the joint state as the private information it receives. In detail, agent \(i\) knows its own local state \(\mdpState^{i}_{t}\) and the private state \(\privateMdpState^{j}_{t}\) of every other agent \(j\) at time \(t\). Agent $i$'s estimate of the joint state is denoted by \(\hat{\mdpState}_{t,i} = (\Tilde{\mdpState}^{1}_{t}, \ldots, \Tilde{\mdpState}^{i-1}_{t}, \mdpState^{i}_{t}, \Tilde{\mdpState}^{i+1}_{t}, \ldots, \Tilde{\mdpState}^{\numAgents}_{t})\). Since agent \(i\) does not know the true joint state \((\mdpState^{1}_{t}, \ldots, \mdpState^{\numAgents}_{t})\), agent \(i\) samples an action \(\mdpAction_{t}^{i}\) for itself from \(\localPolicy^{i}\) using its state estimate \(\hat{\mdpState}_{t,i}\). We note that the agents do not communicate during the action selection phase since the local policies are independent given the joint state. After choosing an action \(\mdpAction_{t}^{i}\), agent \(i\) executes this action and transitions to a next state \(\mdpState_{t+1}^i\). In the next time step \(t+1\), agent \(i\) samples a private state \(\privateMdpState^{i}_{t}\) using~$\probabilityMeasure^i_{\privacyLevel}$ and shares this private state with the other agents. Then, the agents again sample and execute their local actions.

\section{Privacy and Performance Tradeoffs}
\label{sec:analytical_results}
We address Problem \ref{prb:bound} in this section and analyze the tradeoff between performance and privacy when executing a collection of local policies with private communications. 

We provide the following lemma from \cite{chen2022differential} which establishes $\privacyLevel$-word differential privacy of the agents' state trajectories generated by the Online mechanism. 

\begin{lemma}[\cite{chen2022differential}]
\label{lem:privacy_req}
Fix a length $T\in\mathbb{N}^+$, an adjacency parameter $\adjParam\in\mathbb{N}^+,$ and a privacy parameter $\privacyLevel\geq0.$ The online mechanism appearing in Definition~\ref{dfn:priv_mech} is $\privacyLevel$-word differentially private (Definition~\ref{dfn:word_dp}) with respect to the Adjacency relation $\adj_{T,\adjParam}$ in Definition~\ref{dfn:adjacency} if $\trueStateProb_\privacyLevel(\privateMdpState^i_{t-1})$ satisfies
\begin{equation}
    \trueStateProb_\privacyLevel(\privateMdpState^i_{t-1})=1/({(\rho(\privateMdpState^i_{t-1})-1)e^{-\nicefrac{\privacyLevel}{\adjParam}+1})}.
\end{equation}
\end{lemma}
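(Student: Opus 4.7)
The plan is to exploit the product structure of the online mechanism. Because each private symbol $\privateMdpState_t^i$ is sampled from a distribution that conditions only on $\mdpState_t^i$ and $\privateMdpState_{t-1}^i$, the probability that $M_{\mdpStateSeq_T^i}$ emits a given output string $\Tilde{\mdpStateSeq}_T^i=\privateMdpState_1^i\dots\privateMdpState_T^i$ when the input is $v\in(\mdpStateSpace^i)^T$ factorizes as
\begin{equation}
\mathbb{P}[M(v)=\Tilde{\mdpStateSeq}_T^i]=\prod_{t=1}^{T}\probabilityMeasure^i_\privacyLevel(\privateMdpState_t^i\mid v_t,\privateMdpState_{t-1}^i).
\end{equation}
Dividing the analogous expression for $v$ by that for $w$, the per-step factors at indices where $v_t=w_t$ equal one, so only the at most $\adjParam$ indices on which $v$ and $w$ disagree can contribute to the overall ratio. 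Thus it suffices to bound each per-step ratio at a differing index by $e^{\privacyLevel/\adjParam}$.

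The crux of the argument is this per-step bound. Reading off Algorithm~\ref{alg:privacy_construction}, if $\beta(\privateMdpState_t^i,\privateMdpState_{t-1}^i)=0$ then both the numerator and denominator vanish and such an output carries no probability mass, so feasibility may be assumed. Under feasibility, $\probabilityMeasure^i_\privacyLevel(\privateMdpState_t^i\mid s,\privateMdpState_{t-1}^i)$ takes only two values as a function of the sensitive symbol~$s$: the ``true-transition'' value $\trueStateProb_\privacyLevel(\privateMdpState_{t-1}^i)$ when $s=\privateMdpState_t^i$, and the ``false-transition'' value $(1-\trueStateProb_\privacyLevel(\privateMdpState_{t-1}^i))/(\outdegree(\privateMdpState_{t-1}^i)-1)$ otherwise. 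The worst-case per-step ratio is therefore
\begin{equation}
\frac{\trueStateProb_\privacyLevel(\privateMdpState_{t-1}^i)\,(\outdegree(\privateMdpState_{t-1}^i)-1)}{1-\trueStateProb_\privacyLevel(\privateMdpState_{t-1}^i)},
\end{equation}
and plugging in the stated expression for $\trueStateProb_\privacyLevel(\privateMdpState_{t-1}^i)$ and simplifying reduces this quantity to exactly $e^{\privacyLevel/\adjParam}$. Multiplying over the at most $\adjParam$ disagreeing positions yields a pointwise likelihood-ratio bound of $e^\privacyLevel$; summing over any measurable $L\subseteq(\mdpStateSpace^i)^T$ then gives the $\privacyLevel$-word differential privacy inequality of Definition~\ref{dfn:word_dp}.

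The main obstacle I expect is the careful bookkeeping around feasibility. The factorization above is valid only on outputs $\Tilde{\mdpStateSeq}_T^i$ for which both likelihoods are positive at every $t$, and at a differing position $t$ where exactly one of $v_t,w_t$ equals $\privateMdpState_t^i$ one must verify that the non-matching input still lands in the ``feasible-but-different'' branch of the algorithm rather than the zero branch. Because the two likelihood computations share the same previous private symbol $\privateMdpState_{t-1}^i$, the feasibility indicator $\beta(\privateMdpState_t^i,\privateMdpState_{t-1}^i)$ is identical on both sides, and this branch is forced whenever $\privateMdpState_t^i$ actually appears in the output string. Once this is settled, everything reduces to the single algebraic identity that the prescribed $\trueStateProb_\privacyLevel$ is tailored to satisfy; the rest of the proof is routine.
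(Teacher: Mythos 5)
The paper itself does not prove this lemma: it is imported verbatim from \cite{chen2022differential} and stated without proof, so your argument has to be judged on its own. Your overall strategy is the right one and matches the standard proof of such composition results: factor the output likelihood into per-step conditionals $\probabilityMeasure^i_\privacyLevel(\privateMdpState_t^i\mid v_t,\privateMdpState_{t-1}^i)$, observe that the factors at indices where $v_t=w_t$ cancel, bound each of the at most $\adjParam$ remaining per-step ratios by $e^{\privacyLevel/\adjParam}$, and sum over $L$. Your handling of the zero branch is also correct, since $\beta(\privateMdpState_t^i,\privateMdpState_{t-1}^i)$ depends only on the output string and is therefore the same for $v$ and $w$. (As a side note, the displayed formula for $\trueStateProb_\privacyLevel$ in the lemma has a typesetting slip --- the ``$+1$'' belongs outside the exponential, as in the supplementary material --- and you have used the intended expression.)

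There is, however, a gap in your per-step case analysis. You assert that, for a fixed feasible output symbol $\privateMdpState_t^i$, the conditional takes ``only two values'' as a function of the sensitive symbol $s$. It takes three. The second branch of Algorithm~\ref{alg:privacy_construction} is $\bigl(1-\trueStateProb_\privacyLevel(\privateMdpState_{t-1}^i)\,\beta(s,\privateMdpState_{t-1}^i)\bigr)/\bigl(\outdegree(\privateMdpState_{t-1}^i)-\beta(s,\privateMdpState_{t-1}^i)\bigr)$, which equals your quoted value $\bigl(1-\trueStateProb_\privacyLevel(\privateMdpState_{t-1}^i)\bigr)/\bigl(\outdegree(\privateMdpState_{t-1}^i)-1\bigr)$ only when the \emph{sensitive} symbol $s$ is itself feasible from the previous \emph{private} symbol; when $\beta(s,\privateMdpState_{t-1}^i)=0$ it equals $1/\outdegree(\privateMdpState_{t-1}^i)$ instead. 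This third case genuinely occurs: the private chain can drift away from the true one, so $\mdpState_t^i$ need not be feasible from $\privateMdpState_{t-1}^i$, and the adjacent string $w$ is only required to be Hamming-close to $v$, not dynamically feasible, so $w_t$ can be an arbitrary state. Consequently at a differing index the ratio can also be $\trueStateProb_\privacyLevel\cdot\outdegree$, or $(\outdegree-1+e^{\privacyLevel/\adjParam})/\outdegree$, or their reciprocals, and each of these must be checked against $e^{\privacyLevel/\adjParam}$. They all do satisfy the bound for $\privacyLevel\geq 0$ (each reduces to the inequality $1\leq e^{\privacyLevel/\adjParam}$), so your identified true-versus-false ratio is indeed the binding worst case and the conclusion survives --- but as written your proof mis-describes the mechanism and silently skips the verification that makes the worst-case claim true.
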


Having established the differential privacy guarantees of Algorithm \ref{alg:policy_exec}, we now focus on performance guarantees. In order to succeed under private communications, the agents' local policies should be as indifferent as possible to the other agents' states. In other words, agents' behaviors should be made nearly independent from each other. 

The collection of local policies induce a joint policy \(\jointPolicy = \lbrace \localPolicy^{i} \rbrace_{i=1}^{\numAgents}\). To measure the dependencies between the agents, we use a quantity called the ``total correlation" of the joint policy~\cite{Karabag2022}. Let $\gameStateRandomVar_t$ be a random variable denoting the joint state of the agents at time $t$ under the joint policy \(\jointPolicy\) with no privatization, $\gameActionRandomVar_t$ be a random variable denoting the joint action of the agents at time $t$, $\mdpStateRandomVar_t^i$ be a random variable denoting the state of Agent $i$ at time $t$, and let $\mdpActionRandomVar_t^i$ be a random variable denoting the action of Agent $i$ at time $t$.
The total correlation $\totalCorrelation_{\jointPolicy}$ of a joint policy $\jointPolicy = \lbrace \localPolicy^{i} \rbrace_{i=1}^{\numAgents}$ is
\begin{equation}
\totalCorrelation_{\jointPolicy} = \Sigma_{i=1}^N \entropy(\mdpStateRandomVar_0^i\mdpActionRandomVar_0^i\ldots\mdpStateRandomVar_\randomReachTime^i)-\entropy(\gameStateRandomVar_0\gameActionRandomVar_0\ldots\gameStateRandomVar_\randomReachTime)\label{eq:proof_tc_def}
\end{equation}
where \(\randomReachTime\) denotes the random hitting time to \(\targetSet \cup \deadSetPrime\), i.e., the effective end of the trajectory in terms of the reach-avoid specification~\cite{Karabag2022}.

We have the following result that relates the success probability under private communications to the success probability under truthful communications (i.e., no privacy). 
The proof of the theorem is included in the supplementary material.

\begin{theorem}\label{thm:performance_bound}
Fix a privacy parameter $\epsilon>0$ and adjacency parameter $\adjParam.$
Given $\numAgents$ agents implementing a collection of local policies $\jointPolicy = \lbrace \localPolicy^{i} \rbrace_{i=1}^{\numAgents}$ with private communications according to Algorithm~\ref{alg:policy_exec}, let $\gameValue^{\private}$ be the success probability under private communications and let $\gameValue^{\oblivious}$ be the success probability under truthful communications, i.e.,  no privacy. Then,
\begin{align}
\label{eq:thm_1_eq}
\gameValue^{\private}& \geq \gameValue^{\oblivious}
    -\sqrt{1 - e^{-\totalCorrelation_{\jointPolicy}} \left((\outdegree_{m}-1)e^{-{\epsilon}/{\adjParam}}+1\right)^{\numAgents \expectedLength^{\oblivious}}},
\end{align}
where $\totalCorrelation_{\jointPolicy}$ is defined in~\eqref{eq:proof_tc_def}, $\outdegree_{m}=\max_{i \in [\numAgents], \mdpState^{i}\in \mdpStateSpace^{i}}\outdegree(\mdpState^{i})$ is the max out-degree, \(\expectedLength^{\oblivious} = \expectation_{\gamePath \sim \gamePathDist^{\oblivious}}[\len(\gamePath)]\) is the expected joint trajectory length when $\jointPolicy$ is executed with no privacy, and $\Gamma^{\oblivious}$ is the probability distribution over joint trajectories induced by the joint policy executed with no privacy.
\end{theorem}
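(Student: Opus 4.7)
The plan is to bound the performance gap $\gameValue^{\oblivious} - \gameValue^{\private}$ by the total variation distance between the joint state-action trajectory distributions, and then invoke the Bretagnolle-Huber inequality to convert this into a KL-divergence bound matching the form of the theorem. Since the reach-avoid success event is measurable with respect to joint state-action trajectories,
\[
\gameValue^{\oblivious} - \gameValue^{\private} \leq d_{TV}(\gamePathDist^{\oblivious}, \gamePathDist^{\private}).
\]
Applying Bretagnolle-Huber in the form $d_{TV}(P,Q)^{2} \leq 1 - \exp(-KL(P\|Q))$, it suffices to exhibit an upper bound $KL(\gamePathDist^{\oblivious} \| \gamePathDist^{\private}) \leq \totalCorrelation_{\jointPolicy} - \numAgents \expectedLength^{\oblivious} \log\bigl((\outdegree_m - 1)e^{-\epsilon/\adjParam} + 1\bigr)$, valid when the right-hand side is non-negative (and the theorem is trivially true otherwise).

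To prove the KL bound, I would introduce the \emph{fully-imaginary} trajectory distribution $\gamePathDist^{\fullyimaginary}$ under which each agent independently samples its own state-action process from the marginal induced by $\gamePathDist^{\oblivious}$. From the definition of total correlation used in~\cite{Karabag2022}, one has the identity $KL(\gamePathDist^{\oblivious} \| \gamePathDist^{\fullyimaginary}) = \totalCorrelation_{\jointPolicy}$. The next step is to relate $\gamePathDist^{\fullyimaginary}$ and $\gamePathDist^{\private}$ through the online privacy mechanism: if every agent's private state agrees with its true state at every time step, then the private process coincides with the truthful process on that sample path, and by Lemma~\ref{lem:privacy_req} the per-step, per-agent probability of this ``true transition'' event is at least $\trueStateProb_\epsilon = 1/\bigl((\outdegree_m - 1)e^{-\epsilon/\adjParam} + 1\bigr)$. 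A change-of-measure argument identifies the contribution of the privacy mechanism to the log-likelihood ratio as $-\log\trueStateProb_\epsilon$ per agent per time step, which summed over $\numAgents$ agents and (in expectation under $\gamePathDist^{\oblivious}$) $\expectedLength^{\oblivious}$ time steps contributes the $\numAgents \expectedLength^{\oblivious} \log\bigl((\outdegree_m - 1)e^{-\epsilon/\adjParam}+1\bigr)$ term.

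The central difficulty is combining the total-correlation term (a KL between true-trajectory distributions) and the privacy term (from the privatization noise on the observed states) into the precise additive structure appearing in the exponent of the theorem. This requires working on an enlarged probability space jointly tracking both the true joint trajectory and each agent's private trajectory, applying the chain rule for KL divergence to the per-step conditional distributions, and verifying that the cross-terms telescope as claimed. A secondary technical issue is handling the random hitting time $\randomReachTime$: Assumption~1 guarantees that $\expectedLength^{\oblivious}$ is finite, and since $\log\bigl((\outdegree_m - 1)e^{-\epsilon/\adjParam}+1\bigr)$ is a constant independent of the trajectory, linearity of expectation replaces the random length with its mean $\expectedLength^{\oblivious}$ in the final bound.
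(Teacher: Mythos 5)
Your skeleton matches the paper's---bound \(\gameValue^{\oblivious}-\gameValue^{\private}\) by the total variation distance, apply Bretagnolle--Huber, and reduce everything to an upper bound on \(\kl(\gamePathDist^{\oblivious}\|\gamePathDist^{\private})\) of the form ``total correlation plus privacy cost''---but the step you defer as ``the central difficulty'' is the entire proof, and the program you sketch for it does not close. KL divergence has no triangle inequality, so the identity \(\kl(\gamePathDist^{\oblivious}\|\gamePathDist^{\fullyimaginary})=\totalCorrelation_{\jointPolicy}\) does not let you pass from the product-of-marginals distribution to \(\gamePathDist^{\private}\), and relating \(\gamePathDist^{\private}\) to \(\gamePathDist^{\fullyimaginary}\) is no easier than relating it to \(\gamePathDist^{\oblivious}\): when private states diverge from true states each agent acts on its own inconsistent estimate, which is not a product-of-marginals execution. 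The paper's mechanism is more elementary than the enlarged-space chain-rule computation you propose: write \(\probabilityMeasure^{\private}(\genericPath)\) as a marginalization over all private trajectories and keep only the single term in which every agent's private output equals its true state at every step; on that event every agent's joint-state estimate is exact, so the policy factors coincide with truthful execution and \(\probabilityMeasure^{\private}(\genericPath)\geq\probabilityMeasure^{\oblivious}(\genericPath)\,\probabilityMeasure_{\privacyLevel}(\privatePath=\genericPath\mid\genericPath)\). Substituting this into the KL sum, the privacy cost is \(-\expectation_{\genericPath\sim\probabilityMeasure^{\oblivious}}[\log\probabilityMeasure_{\privacyLevel}(\privatePath=\genericPath\mid\genericPath)]\leq\numAgents\expectedLength^{\oblivious}\log\bigl((\outdegree_{m}-1)e^{-\privacyLevel/\adjParam}+1\bigr)\), and the total correlation enters through subadditivity of entropy, \(\entropy(\gameStateRandomVar_0\gameActionRandomVar_0\ldots\gameStateRandomVar_\randomReachTime)\leq\sum_{i}\entropy(\mdpStateRandomVar_0^i\mdpActionRandomVar_0^i\ldots\mdpStateRandomVar_\randomReachTime^i)\), not through the identity you invoke. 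Without the inequality \(\probabilityMeasure^{\private}(\genericPath)\geq\probabilityMeasure^{\oblivious}(\genericPath)\,\probabilityMeasure_{\privacyLevel}(\privatePath=\genericPath\mid\genericPath)\) or a substitute for it, your argument has no bridge between the two trajectory distributions.

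Separately, your signs are inconsistent with one another. You declare that it suffices to prove \(\kl\leq\totalCorrelation_{\jointPolicy}-\numAgents\expectedLength^{\oblivious}\log\bigl((\outdegree_{m}-1)e^{-\privacyLevel/\adjParam}+1\bigr)\); since the subtracted term is nonnegative, this target is unprovable in general (for independent agents, \(\totalCorrelation_{\jointPolicy}=0\), it would force a negative KL divergence). Yet your own per-step accounting contributes \(+\log\bigl((\outdegree_{m}-1)e^{-\privacyLevel/\adjParam}+1\bigr)\) per agent per step, i.e., it establishes \(\kl\leq\totalCorrelation_{\jointPolicy}+\numAgents\expectedLength^{\oblivious}\log\bigl((\outdegree_{m}-1)e^{-\privacyLevel/\adjParam}+1\bigr)\), which is the bound the paper actually proves. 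Pushing that bound through Bretagnolle--Huber places \(\bigl((\outdegree_{m}-1)e^{-\privacyLevel/\adjParam}+1\bigr)^{-\numAgents\expectedLength^{\oblivious}}\) inside the radicand---the reciprocal of what the theorem displays. The positive exponent in the statement appears to be a typographical error (it can make the radicand negative, and the paper's own reading of that factor as ``the probability that the private trajectories equal the true trajectories'' requires the negative exponent); rather than reverse-engineering an impossible KL target from it, you should have flagged it and proved the achievable bound.
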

The term \(\left((\outdegree_{m}-1)e^{-{\epsilon}/{\adjParam}}+1\right)^{\numAgents \expectedLength^{\oblivious}}\) in Theorem \ref{thm:performance_bound} represents the probability of events where the private state trajectories are the same as the true state trajectories.  
The term \(e^{-\totalCorrelation_{\jointPolicy}}\) in Theorem \ref{thm:performance_bound} is a proxy to account for the events where the private state trajectories are different from the true state trajectories. In these events, the agents can still succeed if the local policies are independent of the other agents' states. A lower total correlation implies lower dependencies between the agents, and that the agents are more likely to succeed. We note that the equality holds in~\eqref{eq:thm_1_eq} when agents communicate truthfully, i.e., $\privacyLevel=\infty$, and each agent acts totally independently from other agents, i.e., $\totalCorrelation_{\jointPolicy} =0$. 

\section{Policy Synthesis}
\label{sec:policy_synthesis}
In preceding sections, we discussed the execution of a fixed collection of local policies and analyzed the performance of these policies under private communications. We  now present the synthesis of a collection of local policies $\jointPolicy = \lbrace \localPolicy^{i} \rbrace_{i=1}^{\numAgents}$ that remains performant under private communications.

We aim to maximize the reach-avoid probability under private communications by minimizing the lower bound on $\gameValue^{\private}$ given in Theorem~\ref{thm:performance_bound}. Since the bound is complex in nature and it is a monotone function of its variables, we instead solve the following optimization problem: 
\begin{equation}
    \text{sup}_{\jointPolicy} \gameValue^{\oblivious} - \delta\expectedLength^{\oblivious}-\beta\totalCorrelation_{\jointPolicy}\label{eq:proxy_optimization_problem}
\end{equation}
where $\delta>0$ and $\beta>0$ are constants. 

Using the stationarity of \(\jointPolicy\), the optimization problem given in \eqref{eq:proxy_optimization_problem} can be represented with occupancy measure variables \(\occupancyVar_{\gameState, \gameAction}\) of the joint state-action space~\cite{Karabag2022}. We refer interested readers to \cite{Karabag2022} for the details of this optimization problem. The objective function of~\eqref{eq:proxy_optimization_problem} contains convex and concave functions of the occupancy measure variables that can be solved using the concave-convex procedure~\cite{Lanckriet2009convergence,Yuille2001procedure} for a local optimum.

After solving for the optimal $\occupancyVar_{\gameState,\gameAction}^*$ of the occupancy measure variables, we compute the local policies. Recall that we assumed in~\S\ref{sec:prob_form}, the agents are given local policies \(\localPolicy^{i}\) that have independent action distributions given the joint state. In order to compute local policies \(\localPolicy^{i}(\gameState, \mdpAction^{i})\), we marginalize the joint occupancy measure. Formally, we have \[\localPolicy^i(\gameState, \mdpAction^{i}) = \bigl( \Sigma_{\gameAction^{-1} \in \gameActionSpace^{-1}} \ \occupancyVar^{*}_{\gameState, (\mdpAction^{i}, \gameAction^{-i})}\bigr) / \bigl(\Sigma_{\gameAction \in \gameActionSpace} \ \occupancyVar^{*}_{\gameState, \gameAction}\bigr).\]

We note that we can alternatively enforce the independence of local policies (given the joint state) during synthesis procedure instead of postprocessing joint occupancy variables. The details of this procedure are given in the supplementary material.

\section{Numerical Experiments}
\label{sec:expirements}

Numerical experiments demonstrate the robustness to private communication enjoyed by the policies synthesized using the procedure described in \S \ref{sec:policy_synthesis}.
In each experiment, we solve \eqref{eq:proxy_optimization_problem} to synthesize minimum-dependency local policies \(\{\localPolicy_{MD}^{i}\}_{i=1}^{\numAgents}\) for the agents in the team.
We use \(\jointPolicy_{MD}\) to denote the joint policy that results from the concurrent execution of these local policies, described in \S \ref{subsec:policy_implementation}.

We compare the performance of \(\jointPolicy_{MD}\) to that of a collection of baseline local policies \(\{\localPolicy_{base}^{i}\}_{i=1}^{\numAgents}\), which are synthesized by optimizing the team's performance without taking the total correlation value into account.
That is, by solving \eqref{eq:proxy_optimization_problem} with \(\delta\) and \(\beta\) set to zero.
We use \(\jointPolicy_{base}\) to refer to the joint policy resulting from the concurrent execution of \(\{\localPolicy_{base}^{i}\}_{i=1}^{\numAgents}\).

Code to reproduce all experiments and analysis is available at \href{https://github.com/cyrusneary/differential_privacy_in_mas}{https://github.com/cyrusneary/differential\_privacy\_in\_mas}.

\subsection{Two-Agent Navigation Example}
\label{sec:experiments_navigation}

We begin by considering the multiagent navigation example introduced in \cite{Karabag2022}.
Two agents operate in a common environment, and each must navigate to a target location while avoiding collisions with its teammate.
To reach their target locations, the agents must navigate past each other by passing through one of two narrow corridors.
The act of jointly navigating the corridors without colliding necessitates coordination between the agents.

The environment is implemented as a grid of cells, each of which corresponds to an individual local state.
At any given timestep, each agent takes one of five separate actions: move left, move right, move up, move down, or remain in place.
Each agent slips with probability \(0.05\) every time it takes an action, resulting in the agent moving to one of its neighboring states instead of its intended target state.

While synthesizing \(\jointPolicy_{MD}\), we set the values of the coefficients \(\expectedLengthCoef\) and \(\totalCorrelationCoef\) in \eqref{eq:proxy_optimization_problem} to \(0.01\) and \(0.4\) respectively. 
These values were selected to strike a balance between the optimization objective's three competing terms.
We fix an adjacency parameter of \(\adjParam = 3\) while constructing the differential privacy mechanisms used in all experiments.

\begin{figure}[t!]
    \centering
\begin{tikzpicture}

\begin{axis}[
width=0.95*\columnwidth, 
height=4.0cm,
legend cell align={left},
legend columns = 1,
legend style={
  fill opacity=0.8,
  draw opacity=1,
  text opacity=1,
  at={(3.05cm, 0.2cm)},
  anchor=south west,
  draw=white!80!black
},
tick align=inside,
tick pos=left,
x grid style={white!69.0196078431373!black},
xlabel={\footnotesize Number of Convex-Concave Iterations},
xmajorgrids,
xmin=0.0, xmax=80.0,
xtick style={color=black},
y grid style={white!69.0196078431373!black},
ylabel={\footnotesize Success Prob.},
ymajorgrids,
ymin=0.0, ymax=1.0562822829432,
ytick style={color=black},
ytick={0.0, 0.2, 0.3, 0.4, 0.5, 0.6, 0.7, 0.8, 0.9, 1.0},
yticklabels={0.0, 0.2, ,0.4, ,0.6, , 0.8, , 1.0},
]

\input{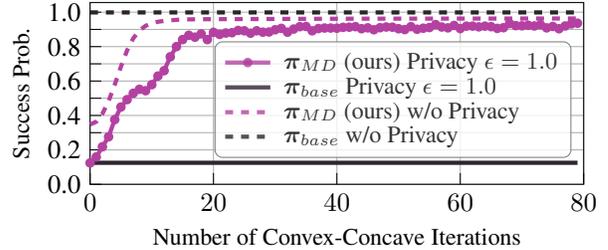}

\end{axis}

\begin{customlegend}[
    legend columns=1, 
    legend cell align={left},
    legend style={
        align=left, 
        column sep=0.5ex, 
        font=\footnotesize, 
        draw=white!50!black,
        fill=white,
        fill opacity=0.8,
        rounded corners=1mm,
        at={(64.0mm, 19.0mm)},
        row sep=-0.08cm,
    }, 
    legend entries={\(\jointPolicy_{MD}\) (ours) Privacy \(\privacyLevel = 1.0\), \(\jointPolicy_{base}\) Privacy \(\privacyLevel = 1.0\), \(\jointPolicy_{MD}\) (ours) w/o Privacy, \(\jointPolicy_{base}\) w/o Privacy,}
]
\addlegendimage{ultra thick, minDependencyPolicy, mark=*, mark size=1, mark options={solid}}
\addlegendimage{ultra thick, baselinePolicy}
\addlegendimage{ultra thick, minDependencyPolicy, dashed}
\addlegendimage{ultra thick, baselinePolicy, dashed}
\end{customlegend}

\end{tikzpicture}
    \caption{
    Probability of task success as a function of the number of iterations of the policy synthesis procedure for the two-agent navigation experiment. 
    In contrast to the baseline policy, \(\jointPolicy_{MD}\) achieves a high probability of success, even under private communications.
    }
    \label{fig:multiagent_navigation_results_during_policy_synthesis}
\end{figure}

\begin{figure}[t!]
    \centering
\begin{tikzpicture}
\definecolor{darkgray176}{RGB}{176,176,176}

\begin{axis}[
width=0.95*\columnwidth, 
height=3.5cm,
tick align=inside,
tick pos=left,
x grid style={darkgray176},
xmajorgrids,
xmin=2.3184187339362, xmax=5.28759802936976,
xtick style={color=black},
y grid style={darkgray176},
ymajorgrids,
ymin=0.0, ymax=1.0,
ytick style={color=black},
ytick={0.0, 0.1, 0.2, 0.3, 0.4, 0.5, 0.6, 0.7, 0.8, 0.9, 1.0},
yticklabels={0.0, , 0.2, ,0.4, ,0.6, , 0.8, , 1.0},
ylabel style={align=center},
ylabel={\footnotesize Success Prob.},
xlabel={\footnotesize Total Correlation Value of the Synthesized Policy \(\policy_{MD}\)},
]

\addplot [ultra thick, minDependencyPolicy, mark=*, mark size=2, mark options={solid}]
table {%
5.15263533412278 0.124
4.68315799524924 0.159
4.39170326856943 0.218
4.21395294917153 0.276
4.11560418830195 0.376
4.0664990574179 0.449
4.0382724880815 0.492
4.01201964692952 0.529
3.98033289098386 0.552
3.94126881661411 0.544
3.89059492078465 0.579
3.81548532419666 0.629
3.69152974862599 0.66
3.49555978179105 0.741
3.25220033928505 0.8
3.04760075299328 0.841
2.9266338974042 0.866
2.86263007993679 0.855
2.82518849700872 0.874
2.79879143253659 0.841
2.7773051196462 0.883
2.75854818176313 0.871
2.74171583915934 0.879
2.72649026216253 0.881
2.7125888366637 0.894
2.69977822678362 0.892
2.6879019550388 0.878
2.67688768770299 0.892
2.6666635470067 0.884
2.6571089641803 0.885
2.6480963282199 0.898
2.63950801507142 0.906
2.63123083634076 0.884
2.62315268700401 0.902
2.61516112642168 0.916
2.60714490897642 0.917
2.59899801058677 0.892
2.59062555345212 0.905
2.58195419778296 0.908
2.57294340178707 0.908
2.56360160687682 0.907
2.55399969661561 0.912
2.54427584498694 0.913
2.53463171849739 0.916
2.5253067388059 0.904
2.5165405771671 0.909
2.50853021230751 0.898
2.50139860047124 0.91
2.49518314060074 0.914
2.48984689064893 0.913
2.48530124554364 0.904
2.48142608181387 0.899
2.47808862349647 0.916
2.47519861043089 0.901
2.47274751258593 0.907
2.47066588715846 0.911
2.46886363478232 0.928
2.46728312502176 0.914
2.46588533392269 0.909
2.46464129099376 0.923
2.46352823148981 0.905
2.4625290607641 0.919
2.46162816549776 0.919
2.4608139556685 0.915
2.46007532058983 0.919
2.4594025213932 0.916
2.45878810585337 0.906
2.45822372805964 0.91
2.45770269694576 0.934
2.45721876173375 0.922
2.45676623525647 0.921
2.45634058879823 0.924
2.45593538427419 0.92
2.45554848171814 0.927
2.45517465206821 0.909
2.45481029851174 0.917
2.45445227695792 0.904
2.45409684651753 0.92
2.45374081260005 0.932
2.45338142918318 0.936
};
\end{axis}

\node at (4.6cm, 1.6cm) [font=\footnotesize, align=left] {Privacy parameter \(\privacyLevel = 1.0\)};

\end{tikzpicture}
    \caption{
    Probability of team success under private communications as a function of the total correlation of the synthesized policies.
    }
    \label{fig:success_prob_vs_corr_two_agent_navigation}
\end{figure}
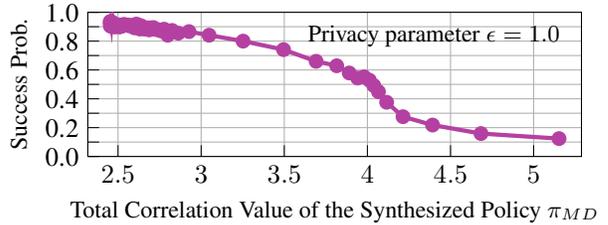

\begin{figure}[t!]
    \centering
\begin{tikzpicture}

\definecolor{darkgray176}{RGB}{176,176,176}

\begin{axis}[
width=0.95*\columnwidth, 
height=3.5cm,
tick align=inside,
tick pos=left,
x grid style={darkgray176},
xmajorgrids,
xmin=-0.0, xmax=10.0,
xtick style={color=black},
y grid style={darkgray176},
ymajorgrids,
ymin=0.0, ymax=1.0,
ytick style={color=black},
ytick={0.0, 0.1, 0.2, 0.3, 0.4, 0.5, 0.6, 0.7, 0.8, 0.9, 1.0},
yticklabels={0.0, , 0.2, ,0.4, ,0.6, , 0.8, , 1.0},
ylabel={\footnotesize Success Prob.},
xlabel={\footnotesize Privacy Parameter \(\privacyLevel\)},
]
\addplot [ultra thick, minDependencyPolicy, mark=diamond*, mark size=3, mark options={solid}]
table {%
0.01 0.911
1.009 0.907
2.008 0.916
3.007 0.924
4.006 0.923
5.005 0.929
6.004 0.945
7.003 0.934
8.002 0.925
9.001 0.939
10 0.958
};
\addplot [ultra thick, baselinePolicy, mark=diamond*, mark size=3, mark options={solid}]
table {%
0.01 0.102
1.009 0.141
2.008 0.142
3.007 0.172
4.006 0.265
5.005 0.346
6.004 0.469
7.003 0.588
8.002 0.689
9.001 0.792
10 0.852
};
\end{axis}

\begin{customlegend}[
    legend columns=1, 
    legend style={
        align=left, 
        column sep=1ex, 
        font=\footnotesize, 
        draw=white!50!black,
        fill=white,
        fill opacity=0.8,
        rounded corners=1mm,
        at={(23.0mm, 15.0mm)},
    }, 
    legend entries={\(\jointPolicy_{MD}\), \(\jointPolicy_{base}\)}
]
\addlegendimage{ultra thick, minDependencyPolicy, mark=diamond*, mark size=3, mark options={solid}}
\addlegendimage{ultra thick, baselinePolicy, mark=diamond*, mark size=3, mark options={solid}}
\end{customlegend}

\end{tikzpicture}
    \caption{Probability of team success under a variety of levels of privacy. Smaller values of the privacy parameter \(\privacyLevel\) correspond to a stronger level of privacy.}
    \label{fig:success_prob_vs_privacy_level_two_agent_navigation}
\end{figure}
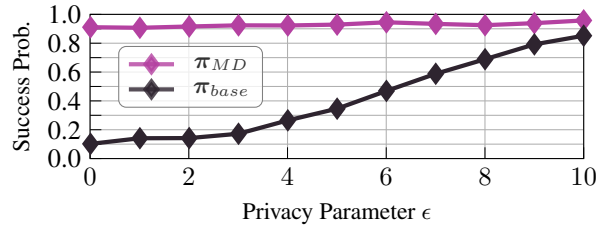

\paragraph{Minimum-dependency policies are \(80\%\) more performant than the baseline under private communications.}
Figure \ref{fig:multiagent_navigation_results_during_policy_synthesis} illustrates the probability of success of \(\jointPolicy_{MD}\) throughout policy synthesis. 
We plot the success probability resulting from both private \((\privacyLevel = 1.0)\) and non-private (the agents do not apply the privacy mechanism to their state trajectories) implementations of communication.
For comparison, we also plot the results of \(\jointPolicy_{base}\).
We estimate the plotted probability values by simulating \(1,000\) rollouts of the policies at each iteration, and computing the empirical rate at which the team reaches its target set.
While the baseline policy achieves a success probability of \(0.99\) under non-private communication, its success probability drops to \(0.13\) when communications are private.
By contrast, even under private communication, \(\jointPolicy_{MD}\) enjoys a probability of success of \(0.92\).

\paragraph{Lower total correlation values result in higher success probabilities under private communications.}

Figure \ref{fig:success_prob_vs_corr_two_agent_navigation} illustrates the team's success probability and the total correlation of each of the joint policies obtained throughout policy synthesis.
We observe that as the total correlation of \(\jointPolicy_{MD}\) decreases during policy synthesis, the policy's performance under private communications significantly increases.
This result provides a strong empirical justification for the use of total correlation as a regularizer during policy synthesis.

\paragraph{The performance of \(\jointPolicy_{MD}\) is robust to level of privacy enforced by the differential privacy mechanism.}
Recall that the parameter \(\privacyLevel\) controls the strength of privacy enforced by the differential privacy mechanism.
Lower values of \(\privacyLevel\) correspond to stronger levels of privacy---the mechanism is more likely to perturb the state trajectories of the agents.
In Figure \ref{fig:success_prob_vs_privacy_level_two_agent_navigation} we observe that the performance \(\jointPolicy_{MD}\) remains consistently high, regardless of the value of \(\privacyLevel\).
By contrast, the performance of \(\jointPolicy_{base}\) is highly sensitive to \(\privacyLevel\); it decreases significantly for moderate to strong levels of privacy.

\subsection{Four-Agent SysAdmin Example}
\label{sec:experiments_sysadmin}

\begin{figure}[t!]
    \centering
    \tikzstyle{branch}=[fill,shape=circle,minimum size=5pt,inner sep=0pt]
\def\horizontaldistance{1.5cm}
\def\verticaldistance{2.0cm}
\def\nodedistance{2.5cm}
\def\branchdist{0.8cm}

\def\stateOutlineThickness{0.2mm}
\def\edgeThickness{0.2mm}

\tikzset{auto, ->, >=stealth', auto, node distance=\nodedistance, node/.style={scale=0.8, minimum size=0pt, inner sep=0pt}}
\tikzset{every loop/.style={min distance=8mm,in=45,out=135,looseness=10}}

\begin{tikzpicture}[scale=1.0]

    \node[state, line width=\stateOutlineThickness, minimum size=0.7cm] (s_repair) {\(0\)};
    
    \node[state, line width=\stateOutlineThickness, right=\horizontaldistance of s_repair, font=\footnotesize, minimum size=0.7cm] (s_nominal) {\(1\)};
    
    \node[state, line width=\stateOutlineThickness, right=\horizontaldistance of s_nominal, align=center, font=\footnotesize, minimum size=0.7cm] (s_unhealthy) {\(2\)};
    
    \node[state, line width=\stateOutlineThickness, right=\horizontaldistance of s_unhealthy, align=center, font=\footnotesize, minimum size=0.7cm] (s_offline) {\(3\)};

    \path [draw, -latex, line width=\edgeThickness] (s_repair.east) -- node [above, yshift=0.0mm, xshift=0.0mm, align=left, font=\footnotesize] {wait, \(\probRepair\)} (s_nominal.west);
    
    \path [draw, -latex, line width=\edgeThickness] (s_nominal.east) -- node [above, yshift=0.0mm, xshift=0.0mm, align=left, font=\footnotesize] {wait, \(\probONB\)} (s_unhealthy.west);
    
    \path [draw, -latex, line width=\edgeThickness] (s_unhealthy.east) -- node [above, yshift=0.0mm, xshift=0.0mm, align=left, font=\footnotesize] {wait, \(\probOffline\)} (s_offline.west);
    
    \path [draw, line width=\edgeThickness, -latex] (s_nominal.south) to [bend left, in=150, out=30] node [above, align=left, font=\footnotesize] {repair, \(1\)} (s_repair.south);
    
    \path [draw, line width=\edgeThickness] (s_unhealthy.south) to [bend left, in=150, out=25] node [above, align=left, font=\footnotesize, xshift=0.6cm, yshift=0.1cm] {repair, \(1\)} (s_repair.south);
    
    \path [draw, line width=\edgeThickness] (s_offline.south) to [bend left, in=150, out=20] node [above, align=left, font=\footnotesize, xshift=1.5cm, yshift=0.2cm] {repair, \(1\)} (s_repair.south);
    
    \path (s_repair.north) edge [loop above] node [above, align=left, font=\footnotesize] {wait, \(1-\probRepair\)} (s_repair.north);
    \path (s_nominal.north) edge [loop above] node [above, align=left, font=\footnotesize] {wait, \(1-\probONB\)} (s_repair.north);
    \path (s_unhealthy.north) edge [loop above] node [above, align=left, font=\footnotesize] {wait, \(1-\probOffline\)} (s_repair.north);
    \path (s_offline.north) edge [loop above] node [above, align=left, font=\footnotesize] {wait, \(1\)} (s_repair.north);

\end{tikzpicture}
    \vspace{-0.6cm}
    \caption{
    Local transition dynamics of the SysAdmin example. A label \((a,p)\) refers to a transition happening w.p. \(p\) under action \(a\).
    }
    \label{fig:sys_admin_local_dynamics}
\end{figure}
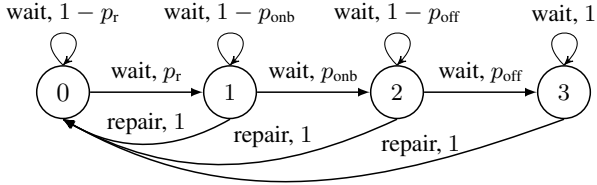

We now consider a variant of the multiagent system administration example from \cite{guestrin2003efficient,choudhury2021scalable}.
A collection of servers must coordinate to provide a consistent level of service, while simultaneously performing necessary maintenance.
Each server is modeled as an individual agent with four local states: nominal \(\mdpState^{i} = 1\), in need of repairs \(\mdpState^{i} = 2\), in repair \(\mdpState^{i} = 0\), and offline \(\mdpState^{i} = 3\).
At any timestep, each agent may choose to continue operation, or to initiate a repair.
We assume the local transition dynamics of the agents, illustrated in Figure \ref{fig:sys_admin_local_dynamics}, to be independent.

The team's task is to reach a target joint state in which all of the servers are operating nominally.
However, we impose the additional constraints that, at any given time during operation, the team is allowed at most two offline servers and at most two servers in the repair state.
If either of these constraints are violated, the team fails the task.

In this example, we set \(\probRepair = 0.9\), \(\probONB = 0.1\), and \(\probOffline = 0.1\), we set the values of the policy synthesis  coefficients \(\expectedLengthCoef\) and \(\totalCorrelationCoef\) to \(0.001\) and \(0.1\) respectively, and we use an adjacency parameter of \(\adjParam = 1\) in the differential privacy mechanism.

\paragraph{\(\jointPolicy_{MD}\) consistently outperforms \(\jointPolicy_{base}\) under a variety of initial system configurations and privacy levels.}

Figure \ref{fig:sys_admin_bar_chart} compares the probability of success achieved by the proposed minimum-dependency policy \(\jointPolicy_{MD}\), to that achieved by the baseline \(\jointPolicy_{base}\).
We test the team's performance under a number of different levels of privacy, and from a variety of initial system configurations. We observe that both \(\jointPolicy_{MD}\) and \(\jointPolicy_{base}\) achieve near-perfect performance under truthful communication (solid bars).
However, when communication is private, \(\jointPolicy_{MD}\) consistently outperforms \(\jointPolicy_{base}\).
In the considered initial configurations, even under the strongest level of privacy, \(\jointPolicy_{MD}\) achieves a probability of success of above \(87\) percent.

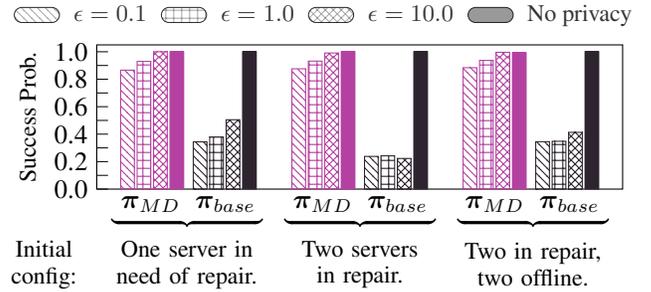
\begin{figure}[t!]
    \centering
\begin{tikzpicture}

\definecolor{crimson2143940}{RGB}{214,39,40}
\definecolor{darkgray176}{RGB}{176,176,176}
\definecolor{darkorange25512714}{RGB}{255,127,14}
\definecolor{forestgreen4416044}{RGB}{44,160,44}
\definecolor{gray127}{RGB}{127,127,127}
\definecolor{mediumpurple148103189}{RGB}{148,103,189}
\definecolor{orchid227119194}{RGB}{227,119,194}
\definecolor{sienna1408675}{RGB}{140,86,75}
\definecolor{steelblue31119180}{RGB}{31,119,180}

\begin{axis}[
width=1.0*\columnwidth, 
height=3.5cm,
tick align=inside,
tick pos=left,
x grid style={darkgray176},
xmin=-0.1796, xmax=2.8916,
xmajorticks=false,
y grid style={darkgray176},
ylabel={\footnotesize Success Prob.},
ymin=0, ymax=1.05,
ytick style={color=black},
ytick={0.0, 0.1, 0.2, 0.3, 0.4, 0.5, 0.6, 0.7, 0.8, 0.9, 1.0},
yticklabels={0.0, , 0.2, ,0.4, ,0.6, , 0.8, , 1.0},
]

\draw[draw=minDependencyPolicy, pattern=north west lines, pattern color=minDependencyPolicy] (axis cs:-0.04,0) rectangle (axis cs:0.04,0.866);
\draw[draw=minDependencyPolicy, pattern=north west lines, pattern color=minDependencyPolicy] (axis cs:0.96,0) rectangle (axis cs:1.04,0.876);
\draw[draw=minDependencyPolicy, pattern=north west lines, pattern color=minDependencyPolicy] (axis cs:1.96,0) rectangle (axis cs:2.04,0.884);

\draw[draw=baselinePolicy, pattern=north west lines, pattern color=baselinePolicy] (axis cs:0.384,0) rectangle (axis cs:0.464,0.344);
\draw[draw=baselinePolicy, pattern=north west lines, pattern color=baselinePolicy] (axis cs:1.384,0) rectangle (axis cs:1.464,0.238);
\draw[draw=baselinePolicy, pattern=north west lines, pattern color=baselinePolicy] (axis cs:2.384,0) rectangle (axis cs:2.464,0.345);

\draw[draw=minDependencyPolicy, pattern=grid, pattern color=minDependencyPolicy] (axis cs:0.056,0) rectangle (axis cs:0.136,0.93);
\draw[draw=minDependencyPolicy, pattern=grid, pattern color=minDependencyPolicy] (axis cs:1.056,0) rectangle (axis cs:1.136,0.932);
\draw[draw=minDependencyPolicy, pattern=grid, pattern color=minDependencyPolicy] (axis cs:2.056,0) rectangle (axis cs:2.136,0.937);

\draw[draw=baselinePolicy, pattern=grid, pattern color=baselinePolicy] (axis cs:0.48,0) rectangle (axis cs:0.56,0.378);
\draw[draw=baselinePolicy, pattern=grid, pattern color=baselinePolicy] (axis cs:1.48,0) rectangle (axis cs:1.56,0.242);
\draw[draw=baselinePolicy, pattern=grid, pattern color=baselinePolicy] (axis cs:2.48,0) rectangle (axis cs:2.56,0.348);

\draw[draw=minDependencyPolicy, pattern=crosshatch, pattern color=minDependencyPolicy] (axis cs:0.152,0) rectangle (axis cs:0.232,1);
\draw[draw=minDependencyPolicy, pattern=crosshatch, pattern color=minDependencyPolicy] (axis cs:1.152,0) rectangle (axis cs:1.232,0.991);
\draw[draw=minDependencyPolicy, pattern=crosshatch, pattern color=minDependencyPolicy] (axis cs:2.152,0) rectangle (axis cs:2.232,0.996);

\draw[draw=baselinePolicy, pattern=crosshatch, pattern color=baselinePolicy] (axis cs:0.576,0) rectangle (axis cs:0.656,0.505);
\draw[draw=baselinePolicy, pattern=crosshatch, pattern color=baselinePolicy] (axis cs:1.576,0) rectangle (axis cs:1.656,0.223);
\draw[draw=baselinePolicy, pattern=crosshatch, pattern color=baselinePolicy] (axis cs:2.576,0) rectangle (axis cs:2.656,0.413);

\draw[draw=minDependencyPolicy, fill=minDependencyPolicy] (axis cs:0.248,0) rectangle (axis cs:0.328,0.99973651515644);
\draw[draw=minDependencyPolicy,fill=minDependencyPolicy] (axis cs:1.248,0) rectangle (axis cs:1.328,0.999744877312207);
\draw[draw=minDependencyPolicy,fill=minDependencyPolicy] (axis cs:2.248,0) rectangle (axis cs:2.328,0.992880614958785);

\draw[draw=baselinePolicy,fill=baselinePolicy] (axis cs:0.672,0) rectangle (axis cs:0.752,0.999999999999297);
\draw[draw=baselinePolicy,fill=baselinePolicy] (axis cs:1.672,0) rectangle (axis cs:1.752,0.999999999999605);
\draw[draw=baselinePolicy,fill=baselinePolicy] (axis cs:2.672,0) rectangle (axis cs:2.752,0.999999999998752);
\end{axis}

\node(init1MDPolicy) [] at (0.72cm, -0.2cm) {\(\jointPolicy_{MD}\)};
\node(init1BasePolicy) [right=-0.05cm of init1MDPolicy] {\(\jointPolicy_{base}\)};

\node(init2MDPolicy) [right=0.25cm of init1BasePolicy] {\(\jointPolicy_{MD}\)};
\node(init2BasePolicy) [right=-0.05cm of init2MDPolicy] {\(\jointPolicy_{base}\)};

\node(init3MDPolicy) [right=0.25cm of init2BasePolicy] {\(\jointPolicy_{MD}\)};
\node(init3BasePolicy) [right=-0.05cm of init3MDPolicy] {\(\jointPolicy_{base}\)};

\draw [decorate, decoration = {calligraphic brace, mirror}, very thick] (0.2cm, -0.4cm) --  (2.2cm, -0.4cm);
\draw [decorate, decoration = {calligraphic brace, mirror}, very thick] (2.5cm, -0.4cm) --  (4.5cm, -0.4cm);
\draw [decorate, decoration = {calligraphic brace, mirror}, very thick] (4.8cm, -0.4cm) --  (6.8cm, -0.4cm);

\node(init1Label) [font=\footnotesize, align=center, text width=2cm] at (1.2cm, -1.0cm) {One server in need of repair.};
\node(init2Label) [font=\footnotesize, align=center, text width=2cm] at (3.5cm, -1.0cm) {Two servers in repair.};
\node(init3Label) [font=\footnotesize, align=center, text width=2cm] at (5.8cm, -1.0cm) {Two in repair, two offline.};

\node(labelForLabels) [font=\footnotesize, align=center, text width=1.5cm] at (-0.7cm, -1.0cm) {Initial\\config:};


\begin{customlegend}[
    legend columns=4, 
    legend style={
        align=left, 
        column sep=0.7ex, 
        font=\footnotesize, 
        draw=none,
        fill=white,
        fill opacity=0.8,
        rounded corners=1mm,
        at={(73.0mm, 26.0mm)},
    }, 
    legend entries={\(\privacyLevel = 0.1\), \(\privacyLevel = 1.0\), \(\privacyLevel = 10.0\), No privacy}
]
\addlegendimage{area legend,pattern=north west lines, pattern color=gray}
\addlegendimage{area legend,pattern=grid, pattern color=gray}
\addlegendimage{area legend,pattern=crosshatch, pattern color=gray}
\addlegendimage{area legend, fill=gray}
\end{customlegend}

\end{tikzpicture}
    \vspace{-0.5cm}
    \caption{
    Probability of success in the SysAdmin example under a variety of initial system configurations and privacy levels.
    }
    \label{fig:sys_admin_bar_chart}
\end{figure}

\subsection{Additional Discussion}
\label{subsec:additional_experimental_discussion}

In addition to the differences between the values of the team's probability of success under \(\jointPolicy_{MD}\) and \(\jointPolicy_{base}\), we also observe a significant change in the expected length of the trajectories that result from these policies.
For example, under truthful communication in the SysAdmin experiments, the expected length of the trajectories induced by \(\jointPolicy_{base}\) range from \(30\) to \(40\) timesteps, depending on the initial configuration of the system. 
For \(\jointPolicy_{MD}\) these values range from \(3\) to \(6\) timesteps.

This observation gives insight into differences in the qualitative behaviors of the policies.
\(\jointPolicy_{base}\) induces conservative behavior that maximizes the team's probability of success by requiring the agents to wait for specific joint states before taking certain actions; the actions of each agent are highly dependent on the exact states of its teammates.
On the other hand, \(\jointPolicy_{MD}\) achieves nearly the same probability of success as \(\jointPolicy_{base}\), but the agents act quickly and accept a small level of risk in order to reduce the dependencies of their actions on the states of their teammates.

The inclusion of the total correlation as a regularization term prevents the policy synthesis procedure from making the agents highly interdependent in order to achieve a marginally higher probability of success.
This tradeoff becomes highly relevant when the inter-agent communications are imperfect, which is necessary in privatized multiagent systems.

Finally, we remark that in some settings there may not exist a collection of highly independent policies that achieve a high performance.
In such cases, we may not observe a large of a gap in performance between \(\jointPolicy_{MD}\) and \(\jointPolicy_{base}\) under private communication.
However, even in these settings, the value of the total correlation may act as an indicator that it is infeasible to achieve strong performance and privacy simultaneously.

\section{Conclusions}
\label{sec:conclusions}

This paper presents a framework to privatize inter-agent communications in cooperative multiagent decision-making problems. Specifically, we adopt a differential privacy mechanism to protect the symbolic state trajectories of agents. We provide theoretical results to analyze the tradeoff between the strength of privacy and the team’s performance. We synthesize robust policies for agents by reducing the total correlation among them. Numerical results demonstrate that the minimum-dependency policies achieve high performance under strong levels of privacy, whereas the team performance of baseline policies that ignore total correlation decreases dramatically under private communications.  

\section*{Acknowledgments}
This work was supported in part by AFRL FA9550-19-1-0169, AFRL FA8651-23-F-A008, ARL ACC-APG-RTP W911NF1920333, ARO W911NF-20-1-0140, NASA 80NSSC21M0071, NSF 1943275, and ONR N00014-21-1-2502.

\bibliographystyle{named}
\bibliography{bibliography}

\onecolumn

\begin{center}
\textbf{\LARGE Differential Privacy in Cooperative Multiagent Planning: Supplementary Material}
\end{center}

\appendix

\section{Proofs for Theoretical Results}

The Kullback-Leibler (KL) divergence \cite{thomas2006elements} between discrete probability distributions $Q^1$ and $Q^2$ with supports $\bm{Q}^1$ and $\bm{Q}^2$, respectively, is
\[
K L\left(Q^1 \| Q^2\right)=\sum_{q \in \bm{Q}^1} Q^1(q) \log \left(\frac{Q^1(q)}{Q^2(q)}\right).
\]

\emph{Notations} We first define some notations that will be used for the proofs. Let $\gameStateRandomVar_t$ be a random variable denoting the joint state of the agents at time $t$ under the joint policy with no privatization, $\gameActionRandomVar_t$ be a random variable denoting the joint action of the agents at time $t$, $\mdpStateRandomVar_t^i$ be a random variable denoting the state of agent $i$ at time $t$, $\mdpActionRandomVar_t^i$ be a random variable denoting the action of Agent $i$ at time $t$. $\gameStateRandomVar_t^{-i}$ be a random variable denoting the state of agent $i$'s teammate exclude agent $i$ itself at time $t$, and $\gameActionRandomVar_t^{-i}$ be a random variable denoting the action of agent $i$'s teammate exclude agent $i$ itself at time $t$.
The total correlation $\totalCorrelation_{\jointPolicy}$ of a joint policy $\jointPolicy$ is
\begin{equation}
    \totalCorrelation_{\jointPolicy} = \sum_{i=1}^\numAgents \entropy(\mdpStateRandomVar_0^{i}\mdpActionRandomVar_0^{i}\ldots\mdpStateRandomVar_{\randomReachTime}^{i})-\entropy(\gameStateRandomVar_0\gameActionRandomVar_0\ldots\gameStateRandomVar_\randomReachTime)
\end{equation}
where \(\randomReachTime\) denotes the random hitting time to \(\targetSet \cup \deadSetPrime\), i.e., the effective end of the trajectory in terms of the reach-avoid specification~\cite{Karabag2022}.  

Let \(\pathSet\) denote all trajectory fragments that end at a state in \(\targetSet \cup \deadSetPrime\), i.e., \(\pathSet = \lbrace \genericPath=\bm{s_0}\bm{a_0}\ldots\bm{s_T}|\bm{s_T} \in  \targetSet \cup \deadSetPrime \text{ and } \forall t < T,  \bm{s_t} \not\in \targetSet \cup \deadSetPrime \rbrace \), and \(\pathSet'\) denote all trajectories that never reach \(\targetSet \cup \deadSetPrime\), i.e., \(\pathSet' = \lbrace w=\genericPath=\bm{s_0}\bm{a_0}\ldots| \forall t \geq 0,  \bm{s_t} \not\in \targetSet \cup \deadSetPrime \rbrace \). Note that every trajectory either starts with a trajectory fragment from \(\pathSet\) or is in \(\pathSet'\). Also, let \(R \subseteq \pathSet \cup \pathSet'\) denote all trajectory fragments that end at a state in \(\targetSet\), i.e., \(R = \lbrace \genericPath=\bm{s_0}\bm{a_0}\ldots\bm{s_T}|\bm{s_T} \in  \targetSet  \text{ and } \forall t < T,  \bm{s_t} \not\in \targetSet \cup \deadSet  \rbrace \).

Let $\Gamma^{\oblivious}$ be the distribution of joint trajectories induced by the joint policy executed with truthful communications (i.e., no privacy). Also, let $\Gamma^{\private}$ be the distribution of joint trajectories with privacy enforced. Let $v^{\oblivious}$ be the probability of success under truthful communications and $v^{\private}$ be the probability of success under private communications.

We use $\probabilityMeasure^{\oblivious}$ to denote the probability measure over the actual (finite or infinite) state-action process under the joint policy with truthful communications. $\probabilityMeasure^{\private}$ denotes the probability measure over the actual (finite or infinite) state-action process under joint policy with private communications. With abuse of notation, we also use $\probabilityMeasure_{\epsilon}$ to denote the conditional probability measure over private state trajectories given the actual state trajectory.

Let $\genericPath=\gameState_0\gameAction_0\gameState_1\gameAction_1\ldots\gameState_T\in (\bm{\mdpStateSpace}\times\bm{\mdpActionSpace})^T$ be a joint trajectory fragment and $\Tilde{\genericPath}=\Tilde{\gameState}_0\Tilde{\gameState}_1\ldots\Tilde{\gameState}_T\in \gameStateSpace^T$ be a private joint state trajectory fragment. We use $\bm{\hat{\mdpState}_t^j}=\{\Tilde{\mdpState}_t^0,\dots,\Tilde{\mdpState}_t^{j-1},\mdpState_t^j,\Tilde{\mdpState}_t^{j+1},\dots,\Tilde{\mdpState}_t^{N}\}$ to denote agent $j$'s copy of private joint state.

The Kleene star applied to a set $V$ of symbols is the set $V^*=\cup_{i\geq 0}V^i$ of all finite-length words where $V^0=\{\Lambda\}$ and $\Lambda$ is the empty string. The set of all infinite-length words is denoted by $V^\omega$.

We introduce the following lemma, to be used in the proof of other theoretical results.

\begin{lemma} \label{lemma:expectedlogproboftrueword}
    \begin{align}
        &\expectation_{\genericPath \sim \probabilityMeasure^{\oblivious}}\left[\log \probabilityMeasure_{\privacyLevel}(\privatePath = \genericPath|\genericPath) \right] \geq  - \numAgents\log\left(\left(\outdegree_{\max}-1\right)\exp(-\frac{\privacyLevel}{\adjParam})+1\right)l^{\oblivious}
    \end{align} 
    where $\outdegree(\mdpState_{t-1}^{i})$ is the out degree of $\mdpState_{t-1}^{i}$ and $\outdegree_{\max}=\max_{\mdpState\in \cup_{i=1}^{\numAgents} \mdpStateSpace^{i}}  \outdegree(\mdpState)$.
\end{lemma}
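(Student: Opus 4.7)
The plan is to expand $\probabilityMeasure_{\privacyLevel}(\privatePath = \genericPath \mid \genericPath)$ as a product of single-step true-transition probabilities, apply Lemma~\ref{lem:privacy_req} to rewrite each factor, bound each factor using the maximum out-degree, and finally take expectation.

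First, I would observe that the Online Mechanism (Definition~\ref{dfn:priv_mech}) is applied independently to each agent, and each agent's mechanism generates $\privateMdpState^i_t$ using only $\mdpState^i_t$ and $\privateMdpState^i_{t-1}$. Conditioned on the event $\{\privatePath = \genericPath\}$, we have $\privateMdpState^i_s = \mdpState^i_s$ for every $i$ and $s$, so along a genuine MDP trajectory (where $\beta(\mdpState^i_t, \mdpState^i_{t-1}) = 1$ by feasibility) the first branch of Algorithm~\ref{alg:privacy_construction} applies at every step. By independence across agents and the chain rule across time, this yields
\begin{equation}
\probabilityMeasure_{\privacyLevel}(\privatePath = \genericPath \mid \genericPath) = \prod_{i=1}^{\numAgents} \prod_{t \geq 1} \trueStateProb_\privacyLevel(\mdpState^i_{t-1}),
\end{equation}
where the inner product has at most $\len(\genericPath)$ factors per agent.

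Next, I would take logarithms and substitute $\trueStateProb_\privacyLevel(\mdpState^i_{t-1}) = 1/((\outdegree(\mdpState^i_{t-1})-1)e^{-\privacyLevel/\adjParam}+1)$ from Lemma~\ref{lem:privacy_req}. Since $x \mapsto \log((x-1)e^{-\privacyLevel/\adjParam}+1)$ is nondecreasing and nonnegative, bounding each out-degree by $\outdegree_{\max}$ and the per-agent count of factors by $\len(\genericPath)$ gives
\begin{equation}
\log \probabilityMeasure_{\privacyLevel}(\privatePath = \genericPath \mid \genericPath) \geq -\numAgents \len(\genericPath) \log\!\left((\outdegree_{\max}-1)e^{-\privacyLevel/\adjParam}+1\right).
\end{equation}
Taking expectation over $\genericPath \sim \probabilityMeasure^{\oblivious}$ and using $\expectation_{\genericPath \sim \probabilityMeasure^{\oblivious}}[\len(\genericPath)] = l^{\oblivious}$ then yields the stated bound.

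The only genuine subtlety is the factorization in the first step: one must carefully justify that conditioning on the entire private trajectory matching $\genericPath$ reduces the joint mechanism probability to a product of single-step true-transition probabilities. This rests on the per-agent independence of the Online Mechanism together with its Markovian structure, which together make the conditional probability at step $t$ depend only on $(\mdpState^i_t, \privateMdpState^i_{t-1}) = (\mdpState^i_t, \mdpState^i_{t-1})$ under the conditioning event. Once this factorization is established, the remainder is substitution from Lemma~\ref{lem:privacy_req} and a monotonicity argument in the out-degree.
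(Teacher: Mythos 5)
Your proposal is correct and follows essentially the same route as the paper's own proof: factorize $\probabilityMeasure_{\privacyLevel}(\privatePath = \genericPath\mid\genericPath)$ into per-agent, per-timestep true-transition probabilities using the mechanism's Markovianity and cross-agent independence, substitute the closed form of the true-transition probability, bound each factor via $\outdegree_{\max}$, and take expectation so that the trajectory length becomes $\expectedLength^{\oblivious}$. The only cosmetic difference is that you invoke Lemma~\ref{lem:privacy_req} for the true-transition probability where the paper cites the corresponding result of \cite{chen2022differential} directly.
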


\begin{proof}[Proof of Lemma \ref{lemma:expectedlogproboftrueword}]

Due to the Markovianity of the online privacy mechanism (Algorithm \ref{alg:privacy_construction}) and independence between the agents, we have  
\begin{align}
     \probabilityMeasure_{\privacyLevel}(\privatePath=\genericPath|\genericPath=\gameState_{0}\ldots \gameState_{\timeHorizon}) = \prod_{t=0}^{\timeHorizon-1}\prod_{i=1}^\numAgents\probabilityMeasure_\privacyLevel(\Tilde{\mdpState}_t^i = \mdpState_t^i|\mdpState_t^i,\Tilde{\mdpState}_{t-1}^i)
\end{align}
We note that if \(\privatePath=\genericPath\), then for all \(t\geq 0\) and \(j \in [\numAgents]\), we have \(\hat{\gameState}_{t}^j = \gameState_{t}\),  i.e., the copy of the private state for every agent always matches the actual joint state. Hence, 
\begin{align}
    \probabilityMeasure_{\privacyLevel}(\privatePath=\genericPath|\genericPath=\gameState_{0}\ldots \gameState_{\timeHorizon}) = \prod_{t=0}^{\timeHorizon-1}\prod_{i=1}^{\numAgents}\probabilityMeasure_\privacyLevel(\Tilde{\mdpState}_t^i = \mdpState_t^i|\mdpState_t^i,\Tilde{\mdpState}_{t-1}^i = \mdpState_{t-1}^i) 
\end{align}

From~\cite[Theorem 7]{chen2022differential}, we have
\begin{align}
\probabilityMeasure_\privacyLevel(\Tilde{\mdpState}_t^i = \mdpState_t^i|\mdpState_t^i,\Tilde{\mdpState}_{t-1}^i = \mdpState_{t-1}^i)=\frac{1}{\left(\outdegree(\mdpState_{t-1}^{i})-1\right)\exp(-\frac{\privacyLevel}{\adjParam})+1}, 
\end{align}
where $\outdegree(\mdpState_{t-1}^{i})$ is the out degree of $\mdpState_{t-1}^{i}.$  Let $\outdegree_{\max}=\max_{\mdpState\in \cup_{i=1}^{\numAgents} \mdpStateSpace^{i}}  \outdegree(\mdpState)$ which gives $$\probabilityMeasure_\privacyLevel(\Tilde{\mdpState}_t^i = \mdpState_t^i|\mdpState_t^i,\Tilde{\mdpState}_{t-1}^i = \mdpState_{t-1}^i)\geq\frac{1}{\left(\outdegree_{\max}-1\right)\exp(-\frac{\privacyLevel}{\adjParam})+1}.$$ Using this, we get
\begin{align*}
\log \probabilityMeasure_{\privacyLevel}(\privatePath=\genericPath|\genericPath=\mdpState_{0}\ldots \mdpState_{\timeHorizon})
&=\log\left(\prod_{t=0}^{\timeHorizon-1}\prod_{i=1}^\numAgents\probabilityMeasure_\privacyLevel(\mdpState_t^i|\mdpState_t^i,\Tilde{\mdpState}_{t-1}^i = \mdpState_{t-1}^i)\right)
\\
&=\sum_{t=0}^{\timeHorizon-1}\sum_{i=1}^\numAgents\log\probabilityMeasure_\privacyLevel(\mdpState_t^i|\mdpState_t^i,\Tilde{\mdpState}_{t-1}^i = \mdpState_{t-1}^i)
\\
&\geq\sum_{t=0}^{\timeHorizon-1}\sum_{i=1}^\numAgents\log\frac{1}{\left(\outdegree_{\max}-1\right)\exp(-\frac{\privacyLevel}{\adjParam})+1}
\\
&=\sum_{t=0}^{{\timeHorizon-1}}-\numAgents\log\left(\left(\outdegree_{\max}-1\right)\exp(-\frac{\privacyLevel}{\adjParam})+1\right).
\end{align*}
Consequently,
\begin{align*}\expectation_{\genericPath \sim \probabilityMeasure^{\oblivious}}\left[\log \probabilityMeasure_{\privacyLevel}(\privatePath = \genericPath|\genericPath) \right]
&=
\sum_{\genericPath\in W}\probabilityMeasure^{\oblivious}(\genericPath)\log\left(\prod_{t=0}^{\timeHorizon-1}\prod_{i=1}^\numAgents\probabilityMeasure_\privacyLevel(\mdpState_t^i|\mdpState_t^i,\mdpState_{t-1}^i)\right)
\\
    &\geq    \sum_{\genericPath\in  W} -\probabilityMeasure^{\oblivious}(\genericPath)\sum_{t=0}^{\timeHorizon-1}\numAgents\log\left(\left(\outdegree_{\max}-1\right)\exp(-\frac{\privacyLevel}{\adjParam})+1\right)
    \\
    &= E\left[\sum_{t=0}^{\tau -1}-\numAgents\log\left(\left(\outdegree_{\max}-1\right)\exp(-\frac{\privacyLevel}{\adjParam})+1\right)\lvert\probabilityMeasure^{\oblivious}\right]
    \\
    &=-\numAgents\log\left(\left(\outdegree_{\max}-1\right)\exp(-\frac{\privacyLevel}{\adjParam})+1\right)E\left[\sum_{t=0}^{\tau -1}1\lvert\probabilityMeasure^{\oblivious}\right]
    \\
    &=-\numAgents\log\left(\left(\outdegree_{\max}-1\right)\exp(-\frac{\privacyLevel}{\adjParam})+1\right)l^{\oblivious}. 
\end{align*}

\end{proof}

\begin{proof} [Proof of Theorem \ref{thm:performance_bound}]
Due to the causality property of the only mechanism (Algorithm \ref{alg:privacy_construction}) and the joint policy execution (Algorithm  \ref{alg:policy_exec}), we have
\begin{align}
    \probabilityMeasure^{\private}(\genericPath)&= \sum_{\privatePath \in \gameStateSpace^T} \probabilityMeasure^{\private}(\genericPath,\privatePath)\nonumber\\
    &= \sum_{\privatePath\in \gameStateSpace^T}\prod_{t=0}^{T-1}\Pr(\gameAction_t\gameState_{t+1},\privateGameState_t|\gameAction_{t-1}\gameState_t\ldots\gameAction_0\gameState_1,\privateGameState_{t-1}\ldots\privateGameState_0),\nonumber
\end{align}
where,
\begin{align}\Pr(\gameAction_t\gameState_{t+1},\privateGameState_t|\gameAction_{t-1}\gameState_t \ldots\gameAction_0\gameState_1,\privateGameState_{t-1}\ldots\privateGameState_0)
    &= \Pr(\gameAction_t\gameState_{t+1},\privateGameState_t|\gameState_t,\privateGameState_{t-1})\label{eq:proof_1}\\
    &=  \prod_{i=1}^\numAgents\Pr(\mdpAction_{t}^i\mdpState_{t+1}^i,\privateGameState_t|\gameState_t,\privateGameState_{t-1})\label{eq:proof_2} \\
    &= \prod_{i=1}^\numAgents  \Pr(\mdpAction_{t}^i\mdpState_{t+1}^i|\gameState_t,\privateGameState_{t},\privateGameState_{t-1})\Pr(\privateGameState_{t}|\gameState_{t},\privateGameState_{t-1})\label{eq:proof_4}\\
    &= \prod_{i=1}^\numAgents \Pr(\mdpAction_{t}^i\mdpState_{t+1}^i|\gameState_{t},\privateGameState_t,\privateGameState_{t-1})\left(\prod_{k=1}^\numAgents\probabilityMeasure_\epsilon(\Tilde{s}_t^k|\mdpState_t^k,\Tilde{s}_{t-1}^k)\right)\label{eq:proof_5}\\
    &= \prod_{i=1}^\numAgents \Pr(\mdpAction_{t}^i\mdpState_{t+1}^i|\hat{\gameState}_t^i)\left(\prod_{k=1}^\numAgents\probabilityMeasure_\epsilon(\Tilde{\mdpState}_t^k|\mdpState_t^k,\Tilde{\mdpState}_{t-1}^k)\right)\label{eq:proof_6}\\
    &= \prod_{i=1}^\numAgents \mdpTransition(\mdpState_t^i,\mdpAction_{t}^i,\mdpState_{t+1}^i)\localPolicy^{i}(\hat{\gameState}_t^i, \mdpAction_{t}^i)\left(\prod_{k=1}^\numAgents\probabilityMeasure_\epsilon(\Tilde{\mdpState}_t^k|\mdpState_t^k,\Tilde{\mdpState}_{t-1}^k)\right)\label{eq:proof_7}.
\end{align}

Equation~\eqref{eq:proof_1} is because of the Markovian property. Equation~\eqref{eq:proof_2} is because the each agent are choosing its next action and state independently. Equation~\eqref{eq:proof_5} is due to each state is generating its private state independently. Equation~\eqref{eq:proof_6} is because for each agent $i$, its true next state $\Tilde{s}_{t+1}^i$ is independent of other states' true states and the private state itself. 

Therefore, 
\begin{align}
    \probabilityMeasure^{\private}(\genericPath)
    &=\sum_{\privatePath\in \gameStateSpace^{T}}\prod_{t=0}^{T-1}\prod_{i=1}^\numAgents \mdpTransition(\mdpState_t^i,\mdpAction_{t}^i,\mdpState_{t+1}^i)\localPolicy^{i}(\hat{\mathbf{s}}_t^i, \mdpAction_{t}^i)\left(\prod_{i=1}^\numAgents\probabilityMeasure_\epsilon(\Tilde{s}_t^i|\mdpState_t^i,\Tilde{s}_{t-1}^i)\right)
    \nonumber\\
    &\geq \prod_{t=0}^{T-1}\prod_{i=1}^\numAgents \mdpTransition(\mdpState_t^i,\mdpAction_{t}^i,\mdpState_{t+1}^i)\localPolicy^{i}(\hat{\mathbf{s}}_t^i, \mdpAction_{t}^i)\left(\prod_{i=1}^\numAgents\probabilityMeasure_\epsilon(\Tilde{s}_t^i|\mdpState_t^i,\Tilde{s}_{t-1}^i)\right), \forall \privatePath\in \gameStateSpace^T,
     \label{eq:proof_8}
\end{align}
where Equation~\eqref{eq:proof_8} is because the probability of all possible private state trajectories has to be greater than any single private state trajectory. We only consider the case when $\privateGameState_t=\gameState_{t}$, which means the private online mechanism will make the correct decision at every time $t$. Therefore,
\begin{align}
    \probabilityMeasure^{\private}(\genericPath) &\geq \prod_{t=0}^{T-1}\prod_{i=1}^\numAgents \mdpTransition(\mdpState_t^i,\mdpAction_{t}^i,\mdpState_{t+1}^i)\localPolicy^{i}(\gameState_{t},\mdpAction_{t}^i)\left(\prod_{i=1}^\numAgents\probabilityMeasure_\epsilon(\mdpState_t^i|\mdpState_t^i,\mdpState_{t-1}^i)\right)
    \\
    &=\probabilityMeasure^{\oblivious}(\genericPath)\left(\prod_{i=1}^\numAgents\probabilityMeasure_\epsilon(\mdpState_t^i|\mdpState_t^i,\mdpState_{t-1}^i)\right) \label{eq:privfullrel} 
\end{align}

Now we look at the following KL divergence:
\begin{align}
KL(\gamePathDist^{\oblivious}||\gamePathDist^{\private})
    &=\sum_{\genericPath\in \pathSet \cup \pathSet'}\probabilityMeasure^{\oblivious}(\genericPath)\log\left(\frac{\probabilityMeasure^{\oblivious}(\genericPath)}{\probabilityMeasure^{\private}(\genericPath)}\right)\nonumber
    \\
    &=\sum_{\genericPath\in \pathSet}\probabilityMeasure^{\oblivious}(\genericPath)\log\left(\frac{\probabilityMeasure^{\oblivious}(\genericPath)}{\probabilityMeasure^{\private}(\genericPath)}\right) \label{eq:everypathends}
    \\
        &\leq \sum_{\genericPath\in \pathSet }\probabilityMeasure^{\oblivious}(\genericPath)\log\left(\frac{\probabilityMeasure^{\oblivious}(\genericPath)}{\probabilityMeasure^{\oblivious}(\genericPath)\left(\prod_{t=0}^\infty\prod_{i=1}^\numAgents\probabilityMeasure_\epsilon(\mdpState_t^i|\mdpState_t^i,\mdpState_{t-1}^i)\right)}\right)\label{eq:proof_12}
        \\
    &= \sum_{\genericPath\in \pathSet}\probabilityMeasure^{\oblivious}(\genericPath)\log(\probabilityMeasure^{\oblivious}(\genericPath)) - \sum_{\genericPath\in \pathSet} \probabilityMeasure^{\oblivious}(\genericPath)\log(\probabilityMeasure^{\oblivious}(\genericPath))\nonumber
    \\
    &\qquad - \sum_{\genericPath\in \pathSet}\probabilityMeasure^{\oblivious}(\genericPath)\log\left(\prod_{t=0}^\infty\prod_{i=1}^\numAgents\probabilityMeasure_\epsilon(\mdpState_t^i|\mdpState_t^i,\mdpState_{t-1}^i)\right)\nonumber
    \\
    &= \entropy(\gameStateRandomVar_0\gameActionRandomVar_0\ldots\gameStateRandomVar_\randomReachTime) -\entropy(\gameStateRandomVar_0\gameActionRandomVar_0\ldots\gameStateRandomVar_\randomReachTime) -\sum_{\genericPath\in \pathSet} \probabilityMeasure^{\oblivious}(\genericPath)\log\left(\prod_{t=0}^{T-1}\prod_{i=1}^\numAgents\probabilityMeasure_\epsilon(\mdpState_t^i|\mdpState_t^i,\mdpState_{t-1}^i)\right) 
    \\
    &\leq \sum_{i=1}^\numAgents \entropy(\mdpStateRandomVar_0^i\mdpActionRandomVar_0^i\ldots\mdpStateRandomVar_\randomReachTime^i)-\entropy(\gameStateRandomVar_0\gameActionRandomVar_0\ldots\gameStateRandomVar_\randomReachTime) -  \sum_{\genericPath\in \pathSet} \probabilityMeasure^{\oblivious}(\genericPath)\log\left(\prod_{t=0}^{T-1}\prod_{i=1}^\numAgents\probabilityMeasure_\epsilon(\mdpState_t^i|\mdpState_t^i,\mdpState_{t-1}^i)\right)\label{eq:proof_13}
    \\
    &= \totalCorrelation_{\jointPolicy} - \sum_{\genericPath\in \pathSet}\probabilityMeasure^{\oblivious}(\genericPath)\log\left(\prod_{t=0}^{T-1}\prod_{i=1}^\numAgents\probabilityMeasure_\epsilon(\mdpState_t^i|\mdpState_t^i,\mdpState_{t-1}^i)\right)\label{eq:proof_14}
    \\
    &=  \totalCorrelation_{\jointPolicy}  - \expectation_{\genericPath \sim \probabilityMeasure^{\oblivious}}\left[\probabilityMeasure_{\privacyLevel}(\privatePath = \genericPath|\genericPath) \right]
\end{align}
where \eqref{eq:everypathends} is due to \(\sum_{\genericPath\in \pathSet'}  \probabilityMeasure^{\oblivious}(\genericPath) = 0\), \eqref{eq:proof_13} is due to the subadditivity of entropy, and \eqref{eq:proof_14} is due to the definition of \(\totalCorrelation_{\jointPolicy}\).

Using Lemma \ref{lemma:expectedlogproboftrueword} in \eqref{eq:proof_14} gives
\begin{equation}
\kl(\gamePathDist^{\oblivious}||\gamePathDist^{\private})
    \leq \totalCorrelation_{\jointPolicy} + \numAgents\log\left(\left(\rho_{\max}-1\right)\exp(-\frac{\epsilon}{\ell})+1\right)l^{\oblivious}. \label{eq:klbound}
\end{equation}

Finally, we show that \(    v^{\private} \geq v^{\oblivious} -1 + \exp(-\totalCorrelation_{\jointPolicy})\left((\rho_{max}-1)\exp\left(-\frac{\epsilon}{\ell}\right)+1\right)^{Nl^{\oblivious}}/2.\)  Let \(R' \subseteq \pathSet \cup \pathSet'\) be an arbitrary set.
\begin{subequations}
\begin{align}
    \gameValue^{\oblivious} - \gameValue^{\private}
    &=  \sum_{\genericPath \in R} \probabilityMeasure^{\oblivious}(\genericPath ) - \probabilityMeasure^{\private}(\genericPath )
    \\
    &\leq \left | \sum_{\genericPath \in R} \probabilityMeasure^{\oblivious}(\genericPath ) - \probabilityMeasure^{\private}(\genericPath )\right |
    \\
    &\leq \sup_{R'} \left |\sum_{\genericPath \in R'} \probabilityMeasure^{\oblivious}(\genericPath ) - \probabilityMeasure^{\private}(\genericPath )\right |
    \\
    &\leq \sqrt{1-\exp(-\kl(\gamePathDist^{\oblivious} || \gamePathDist^{\private}))} \label{bretagnollehuber}
\end{align}
\end{subequations}
where \eqref{bretagnollehuber} is due to Bretagnolle-Huber inequality~\cite{bretagnolle1979estimation}.
Rearranging the terms of \eqref{bretagnollehuber} and using \eqref{eq:klbound} yields to the desired result.

\end{proof}

We note that apart from Theorem~\ref{thm:performance_bound}, we can derive a tighter lower bound on $\mdpValue^{pr}$. 

\begin{theorem}\label{thm:performance_2_bound}
Given $\epsilon>0$, for $\numAgents$ agents, we have
\begin{equation}\label{eq:thm_2_equation}
    \mdpValue^{pr}\geq \mdpValue^{\oblivious}-1+\left(\left(\outdegree_{\max}-1\right)\exp(-\frac{\epsilon}{\adjParam})+1\right)^{\numAgents l^{\oblivious}}.
\end{equation}
\end{theorem}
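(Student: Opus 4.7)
The plan is to bypass the Kullback--Leibler / Bretagnolle--Huber route used in Theorem~\ref{thm:performance_bound} and argue directly from the pointwise lower bound on $\probabilityMeasure^{\private}(\genericPath)$ that is obtained on the way to~\eqref{eq:privfullrel}. Retaining only the single summand $\privatePath=\genericPath$ in the derivation at~\eqref{eq:proof_8}, but tracked at every time step of $\genericPath$, gives for every $\genericPath\in\pathSet$
\[
\probabilityMeasure^{\private}(\genericPath) \;\geq\; \probabilityMeasure^{\oblivious}(\genericPath)\prod_{t=0}^{\len(\genericPath)-1}\prod_{i=1}^{\numAgents}\probabilityMeasure_\epsilon\bigl(\mdpState_{t}^{i}\mid \mdpState_{t}^{i},\mdpState_{t-1}^{i}\bigr).
\]
Substituting the closed form $\probabilityMeasure_\epsilon(\mdpState_{t}^{i}\mid \mdpState_{t}^{i},\mdpState_{t-1}^{i})=1/((\outdegree(\mdpState_{t-1}^{i})-1)e^{-\epsilon/\adjParam}+1)$ established inside the proof of Lemma~\ref{lemma:expectedlogproboftrueword}, and lower bounding each out-degree by $\outdegree_{\max}$, produces, with $a_{\epsilon}:=(\outdegree_{\max}-1)e^{-\epsilon/\adjParam}+1$,
\[
\probabilityMeasure^{\private}(\genericPath) \;\geq\; \probabilityMeasure^{\oblivious}(\genericPath)\,a_{\epsilon}^{-\numAgents\len(\genericPath)}.
\]

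Next I would sum this pointwise inequality over the set $R$ of successful trajectory fragments. Using $\mdpValue^{\oblivious}=\sum_{\genericPath\in R}\probabilityMeasure^{\oblivious}(\genericPath)$ and $\mdpValue^{\private}\geq\sum_{\genericPath\in R}\probabilityMeasure^{\private}(\genericPath)$, subtraction yields
\[
\mdpValue^{\oblivious}-\mdpValue^{\private} \;\leq\; \sum_{\genericPath\in R}\probabilityMeasure^{\oblivious}(\genericPath)\bigl(1-a_{\epsilon}^{-\numAgents\len(\genericPath)}\bigr) \;\leq\; \sum_{\genericPath\in\pathSet}\probabilityMeasure^{\oblivious}(\genericPath)\bigl(1-a_{\epsilon}^{-\numAgents\len(\genericPath)}\bigr) \;=\; 1-\expectation^{\oblivious}\!\bigl[a_{\epsilon}^{-\numAgents\randomReachTime}\bigr],
\]
where the middle inequality uses $R\subseteq\pathSet$ together with $a_{\epsilon}\geq 1$ (so the integrand is nonnegative), and the final equality uses the finite-occupancy assumption that gives $\sum_{\genericPath\in\pathSet}\probabilityMeasure^{\oblivious}(\genericPath)=1$ (the same step invoked at~\eqref{eq:everypathends}). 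Since the map $t\mapsto a_{\epsilon}^{-\numAgents t}=\exp(-\numAgents t\ln a_{\epsilon})$ has second derivative $(\numAgents\ln a_{\epsilon})^{2}\,a_{\epsilon}^{-\numAgents t}\geq 0$ and is therefore convex on $[0,\infty)$, Jensen's inequality applied to the random hitting time $\randomReachTime$ under $\probabilityMeasure^{\oblivious}$ yields $\expectation^{\oblivious}[a_{\epsilon}^{-\numAgents\randomReachTime}]\geq a_{\epsilon}^{-\numAgents\expectedLength^{\oblivious}}$, and rearrangement gives exactly~\eqref{eq:thm_2_equation}.

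The main obstacle I anticipate is securing the pointwise inequality in the first step: the calculation at~\eqref{eq:proof_8} in the proof of Theorem~\ref{thm:performance_bound} is recorded for a single transition, and one must verify that restricting the inner sum over private trajectories to the single summand $\privatePath=\genericPath$---so that each agent's copy of the joint state satisfies $\hat{\gameState}_{t}^{i}=\gameState_{t}$ at every $t$ and the policy factor $\localPolicy^{i}(\hat{\gameState}_{t}^{i},\mdpAction_{t}^{i})$ collapses to $\localPolicy^{i}(\gameState_{t},\mdpAction_{t}^{i})$---preserves the inequality when iterated across the full length $\len(\genericPath)$. Once this estimate is in hand, the remaining two manipulations (extending the sum from $R$ to $\pathSet$ and invoking Jensen) are each a one-line calculation. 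Crucially, this route avoids both the subadditivity-of-entropy step used to introduce $\totalCorrelation_{\jointPolicy}$ and the Bretagnolle--Huber step used to pass from KL to total variation in the proof of Theorem~\ref{thm:performance_bound}, which is exactly why~\eqref{eq:thm_2_equation} is sharper than~\eqref{eq:thm_1_eq}.
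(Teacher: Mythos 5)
Your proposal is correct and is essentially the paper's own proof: the paper likewise keeps only the summand $\privatePath=\genericPath$ to obtain the pointwise bound \eqref{eq:privfullrel}, then applies $\Pr(A\cap B)\ge\Pr(A)+\Pr(B)-1$ (algebraically identical to your step of extending the nonnegative sum from $\reachPathSet$ to $\pathSet$), and finally combines Jensen's inequality with Lemma~\ref{lemma:expectedlogproboftrueword}, which is equivalent to your direct application of Jensen to the convex map $t\mapsto a_{\epsilon}^{-\numAgents t}$ of the hitting time. Note that both your derivation and the paper's proof actually produce the exponent $-\numAgents \expectedLength^{\oblivious}$ on $(\outdegree_{\max}-1)e^{-\epsilon/\adjParam}+1$, so the positive exponent appearing in the displayed statement of the theorem is a sign typo that your write-up implicitly corrects.
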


\begin{proof} [Proof of Theorem \ref{thm:performance_2_bound}]
    
As shown in the proof of Theorem \ref{thm:performance_bound}, we have
\begin{align}
    \mdpValue^{\private}&= \sum_{\genericPath=\gameState_0\gameAction_0\gameState_1\gameAction_1\ldots\gameState_T \in \pathSet} \probabilityMeasure^{\private}(\genericPath)  \mathds{1}(\genericPath \in \reachPathSet)  \\
    &\geq \sum_{\genericPath=\gameState_0\gameAction_0\gameState_1\gameAction_1\ldots\gameState_T \in \pathSet} \probabilityMeasure^{\oblivious} (\genericPath) \mathds{1}(\genericPath \in \reachPathSet) \left(\prod_{t=0}^{T-1} \prod_{k=1}^\numAgents \mu_\epsilon(\mdpState_t^k|\mdpState_t^k,\mdpState_{t-1}^k)\right)
    \\
    &= \Pr(\genericPath \in \reachPathSet \wedge \privatePath = \genericPath | \genericPath \sim \probabilityMeasure^{\oblivious}, \privatePath \sim \probabilityMeasure^{\epsilon}(\cdot| \genericPath)).
\end{align}

By the union bound, we have 
\begin{align}
    \mdpValue^{\private}&\geq \Pr(\genericPath \in \reachPathSet | \genericPath \sim \probabilityMeasure^{\oblivious}) + \expectation_{\genericPath \sim \probabilityMeasure^{\oblivious}}\left[ \probabilityMeasure_{\privacyLevel}(\privatePath = \genericPath|\genericPath) \right] - 1 
    \\
    &= \mdpValue^{\oblivious} + \expectation_{\genericPath \sim \probabilityMeasure^{\oblivious}}\left[ \probabilityMeasure_{\privacyLevel}(\privatePath = \genericPath|\genericPath) \right] -1 \label{eq:unionbound}
\end{align}

Then with \[\expectation_{\genericPath \sim \probabilityMeasure^{\oblivious}}\left[\probabilityMeasure_{\privacyLevel}(\privatePath = \genericPath|\genericPath) \right]=\sum_{\genericPath \in \pathSet}\probabilityMeasure^{\oblivious}(\genericPath)\probabilityMeasure_{\privacyLevel}(\privatePath = \genericPath|\genericPath) \]
and Jensen's inequality,  we have
\begin{align}
    \expectation_{\genericPath \sim \probabilityMeasure^{\oblivious}}\left[\probabilityMeasure_{\privacyLevel}(\privatePath = \genericPath|\genericPath) \right]
    &=\exp\left(\log\sum_{\genericPath \in \pathSet}\probabilityMeasure^{\oblivious}(\genericPath)\probabilityMeasure_{\privacyLevel}(\privatePath = \genericPath|\genericPath) \right)
    \\
    &\geq \exp\left(\sum_{\genericPath \in \pathSet}\probabilityMeasure^{\oblivious}(\genericPath)\log \probabilityMeasure_{\privacyLevel}(\privatePath = \genericPath|\genericPath) \right)\label{eq:expectation_of_true-word}
    \\
    & =\exp\left( \expectation_{\genericPath \sim \probabilityMeasure^{\oblivious}}\left[\log \probabilityMeasure_{\privacyLevel}(\privatePath = \genericPath|\genericPath) \right]\right).
\end{align}
Using Lemma \ref{lemma:expectedlogproboftrueword}, we get
\begin{align}
    \mdpValue^{\private} &\geq \mdpValue^{\oblivious}-1+\exp\left( \expectation_{\genericPath \sim \probabilityMeasure^{\oblivious}}\left[\log \probabilityMeasure_{\privacyLevel}(\privatePath = \genericPath|\genericPath) \right]\right)
    \\
    &\geq \mdpValue^{\oblivious}-1+\left(\left(\outdegree_{\max}-1\right)\exp(-\frac{\epsilon}{\adjParam})+1\right)^{\numAgents \expectedLength^{\oblivious}},
\end{align}
which completes the proof.

\end{proof}

Compared to \eqref{eq:thm_1_eq}, \eqref{eq:thm_2_equation} does not take the total correlation $\totalCorrelation_{\jointPolicy}$ into account and only focuses on the success probability when the private state trajectories are the same with the original state trajectories. As a result, a joint policy $\jointPolicy = \lbrace \localPolicy^{i} \rbrace_{i=1}^{\numAgents}$ synthesized by minimizing the lower bound in \eqref{eq:thm_2_equation} does not enjoy the robustness brought by minimizing \eqref{eq:thm_1_eq}. The inclusion of total correlation in the objective function increases the team performance under private communications since the agents' policies are less sensitive to each other's state trajectories. 

\section{Details on the Independence Assumption for Local Policies} \label{sec:indepencence-appendix}
In this work, we assume that the local policies of the agents are independent from each other given the joint state. This assumption can be enforced during the synthesis procedure with the following constraint, 
\begin{equation}
    \sum_{i=1}^{\numAgents} \entropy(\mdpActionRandomVar_t^i|\gameStateRandomVar_{t}) = \entropy(\gameActionRandomVar_t^i|\gameStateRandomVar_{t}) \text{ for all } t=0,1, \ldots, \randomReachTime.
    \label{cons:localentequaljoint}
\end{equation} 
The constraint implies that the action distributions of the agents are independent given the joint state. Due to the stationarity of the policies, we can rewrite \eqref{cons:localentequaljoint} as \[ \sum_{i =1}^{\numAgents} \sum_{\mdpAction^{i}\in\mdpActionSpace^{i}}\occupancyVar_{\gameState,\mdpAction^{i}}\log\left(\frac{\sum\limits_{\mdpActionAlt^{i}\in\mdpActionSpace^{i}}\occupancyVar_{\gameState,\mdpActionAlt^{i}}}{\occupancyVar_{\gameState,\mdpAction^{i}}}\right) = \sum_{\gameAction\in\gameActionSpace}\occupancyVar_{\gameState,\gameAction}\log\left(\frac{\sum\limits_{\gameActionAlt\in\gameActionSpace}\occupancyVar_{\gameState,\gameActionAlt}}{\occupancyVar_{\gameState,\gameAction}}\right) \] for all \(\gameState \in \gameStateSpace.\) We note that both sides of the equality are concave functions of the occupancy measure variables. Similar to the objective function of that we consider, we can employ the convex-concave procedure to handle this constraint.

\end{document}